\documentclass{ieeeaccess}
\usepackage{cite}
\usepackage{amsmath,amssymb,amsfonts}
\usepackage{algorithmic}
\usepackage{graphicx}
\usepackage{textcomp}

\usepackage{verbatim}
\usepackage{kotex}
\usepackage[utf8]{inputenc}
\usepackage{subfigure}
\usepackage{dsfont}
\usepackage{amssymb}
\usepackage{algorithm}
\usepackage{cuted}
\usepackage{epstopdf}
\usepackage{soul,color}
\usepackage{setspace}
\usepackage{multirow}
\usepackage{setspace}
\usepackage{lipsum}
\usepackage{multicol}
\usepackage{float}

\usepackage[top=0.61in, bottom=0.61in, left=0.60in, right=0.60in]{geometry}

\usepackage{amsthm}
\newtheorem{lem}{Lemma}

\newtheorem{pro}{Proposition}

\newcommand\blfootnote[1]{%
  \begingroup
  \renewcommand\thefootnote{}\footnote{#1}%
  \addtocounter{footnote}{-1}%
  \endgroup
}

\def\BibTeX{{\rm B\kern-.05em{\sc i\kern-.025em b}\kern-.08em
    T\kern-.1667em\lower.7ex\hbox{E}\kern-.125emX}}
\begin{document}
\history{Received March 22, 2023, Accepted April 24, 2023, date of current version April 24, 2023.\\
This work has been submitted to the IEEE for possible publication. Copyright may be transferred without notice, after which this version may no longer be accessible.}
\doi{10.1109/ACCESS.2023.3270631}

\title{
Machine Learning for Relaying Topology: Optimization of IoT Networks with Energy Harvesting
}
\author{
	\uppercase{Kiseop Chung}\authorrefmark{1} \IEEEmembership{Member, IEEE} 
	and \uppercase{Jin-Taek Lim}\authorrefmark{2} \IEEEmembership{Member, IEEE}
}
\address{Agency for Defense Development, Daejeon 34186, Republic of Korea (e-mail: cks991030@snu.ac.kr\authorrefmark{1}, jtlim870708@gmail.com\authorrefmark{2})}
\tfootnote{This work was supported by the Agency for Defense Development Grant funded by the Korean Government(2023).}

\markboth
{Chung \headeretal: Machine Learning for Relaying Topology: Optimization of IoT Network with Energy Harvesting}
{Chung \headeretal: Machine Learning for Relaying Topology: Optimization of IoT Network with Energy Harvesting}

\corresp{Corresponding author: Jin-Taek Lim}

\begin{abstract}

In this paper, we examine Internet of Things (IoT) systems related to smart cities, smart factories, connected cars, etc.
To support such systems in a wide area with low power consumption, energy harvesting technology utilizing wireless charging infrastructure is necessary for the longevity of networks.
Considering that the position and amount of energy charged for each device could be unbalanced according to the distribution of nodes and energy sources, maximizing the minimum throughput among all nodes has become an NP-hard challenging issue.
To overcome this challenge, we propose a machine learning based relaying topology algorithm with a novel backward-pass rate assessment method to present proper learning direction and an iterative balancing time slot allocation algorithm which can utilize a node with sufficient energy as the relay.
To validate our proposed scheme, we conducted simulations on our established system model; thus, we confirm that the proposed scheme is stable and superior to conventional schemes.
\end{abstract}

\begin{keywords}
unsupervised learning, variational autoencoder, IoT network, TDMA system, energy harvesting, relay.
\end{keywords}

\titlepgskip=-15pt

\maketitle

%%%%%%%%%%%%%%%%%%%%%%%%%%%%%%%%%%%%%%%%%%%%%%%%%%%%%%%%%%%%%%%%%%%%%%%%%%%%%
\section{Introduction}
\label{sec:introduction}
\blfootnote{
The preprint of this article may be found in:\\ 
http://arxiv.org/abs/2301.08481
}
Internet of Things (IoT) technology will bring enormous innovations for societal and industrial systems in terms of improved efficiency, sustainability, and safety.
By exchanging many types of information such as traffic, energy usage, and environment data (e.g., temperature, humidity), IoT devices can create smart-cities, connected industries, connected vehicles and integrated health care services \cite{Shafique20}.
As these application services must have numerous wireless devices with scalability and a long lifespan, there is an emerging need for easy-to-maintain wireless networks with a simplified topology.
In addition, networks such as these do not require a high data rate for transmitted data, but require several different characteristics, such as frame sizes to be in the order of tens of bytes, intermittent transmission to be a few times per day, ultra-low speeds with an insensitive delay requirement, and mostly uplink-centric transmission.
These requirements for an uplink-centric, low data rate and maintenance free network result in wireless-powered star-of-stars network technologies where a new air interface provides an energy-efficient relaying alternative to available wireless network systems while covering larger areas.

Recently, a study on a relay-based star-of-stars topology for energy efficiency was conducted to cover a large area \cite{Li12}.
\cite{Li12} proposed a cooperative multi-hop transmission scheme in wireless sensor networks considering the circuit energy consumption, and proved that the energy consumption per unit transmit distance can be minimized.
As a new method for long-term operation without maintenance, research on energy harvesting-based IoT networks are also being actively conducted \cite{Ma19}.

In the energy harvesting-based IoT research area, there are some issues where the amount of energy charged by each node is different according to the distribution of the energy sources (e.g., power beacons (PBs)) and nodes, with a varying amount of residual/remaining energy in the devices.
\cite{Benkhelifa21} studied the resource allocation in energy harvesting (EH)-enabled Long Range (LoRa) networks with external energy sources such as power beacons.
They solved the problem of regional energy distribution by the sub-optimal spreading factor and power allocation algorithm.
Moreover, \cite{Leu14} proposed a clustering algorithm for grouping sensor nodes based on the energy distribution of the remaining nodes.
These studies maximize the energy efficiency of the network through scheduling under the distribution of energy.
Furthermore, \cite{He20} proposed a distributed protocol maximizing the minimum sensing rate of sources operating in energy harvesting-based IoT networks.
They suggested a method for determining nodes to be charged, transmission power and charging duration of a PB, and routing of data by nodes and link scheduling.

However, prior studies have yet to considered that, if the role of the node is not fixed, that is, if it is allowed to operate as a relay optimally adapting to the change of the energy distribution, energy efficiency could be improved.
More specifically, if a certain node is close to PBs, it may have more harvested energy.
Moreover, nodes closer to the sink have a lower required energy for transmission; conversely, nodes farther away from the sink have a higher required energy.
Accordingly, if some nodes with sufficient energy can transmit data on behalf of nodes with lower energy, the efficiency of the entire network could be maximized.
However, these kinds of routing problems are usually formulated as a max-min mixed integer optimization problem, which hinders solving the optimal solution in polynomial time due to its NP-hardness and non-linearity of the problem \cite{Lodi09}.

\subsection{Related Works}

Recently, deep learning based techniques have been applied to various wireless network problems in \cite{Lee19,Luo19,Lee21-1}.
Unlike traditional solutions based on mathematical models, deep learning provides the solution without the complex manipulation of mathematics.
The neural network itself is trained to produce the optimal solution based on the given data or learning direction.
Applying deep learning to existing communication networks can largely be classified into supervised learning or unsupervised learning, depending on whether there is a data set or not.
There has been one recent work on bringing network topology graph information into learning models in the communication community.

\cite{Mao17} proposed a network routing method based on supervised deep learning techniques which shows adequate results on simple topologies, but is less effective on complex ones.
In addition, the application of the deep learning for handling on graph-related problems was examined in \cite{Zhang20}, based on supervised learning.
In addition to topology problems, deep learning is being applied to many communication studies.
\cite{Lee21-2} designed an efficient deep neural network framework and a novel training strategy for wireless-powered secure communication with near-optimal performance and a low computation time.
As can be seen, the solution using supervised learning is achieving modest results, but it its premise is based on the possession of a vast labeled dataset for learning.

Unfortunately, since the routing topology problem defined above is NP-hard and non-linear, the labeled dataset for supervised learning is not easily obtainable.
Unsupervised learning does not require a dataset, making it generally important to design a loss function or rewards that can set the correct direction for learning.
Deep Reinforcement Learning (DRL), one of the unsupervised learning techniques, is about an agent interacting with the environment: learning an optimal policy by trial and error for sequential decision making problems.
In a given communication problem, if the environment, state, reward, and action are properly modeled, the solution for the problem can be found through DRL without a dataset.
Therefore, the optimization level of the solution depends on how accurately the environment, states, reward and action are modeled.
Usually, such modeling requires considerable skill as the problem becomes more complex.
Recently, \cite{Ding20} proposed an energy-efficient fair communication through trajectory design and band allocation (EEFC-TDBA) which allows an unmanned aerial vehicle (UAV) to adjust the flight speed and direction to enhance energy efficiency and allocate a frequency band to achieve fair communication service.
The DRL algorithm for solving the latency minimization problem for both communication and computation in a maritime UAV swarm mobile edge computing network was suggested and analyzed in \cite{Liu22}.

Additionally, the variational autoencoder (VAE), first proposed by \cite{Kingma13}, could be an unsupervised learning technique that can avoid the complex modeling of DRL and can solve the NP-hard and non-linear problem in an unsupervised manner \cite{Zhu22}.
VAEs are probabilistic generative models that require neural networks as the encoder and decoder for the first and second component, respectively.
The encoder maps the input variable to a latent space that corresponds to the parameters of a variational distribution.
The decoder has the opposite function: to map from the latent space to the input space to produce or generate data points.
Configured in this way, the encoder and decoder can generate multiple different samples that all come from the same latent distribution.
Therefore, the applications of VAEs are usually found in generating data for speech, images, and text.
For example, \cite{Simonovsky18} implemented a chemical molecule graph generative model using VAE, though its application was limited only to smaller graphs and required a dataset for training, making it difficult to apply directly to the NP-hard problem.

In the communication field, VAE has been applied in the following recent studies.
\cite{Lauinger22} analyzed the VAE in terms of performance and flexibility over a classical additive white Gaussian noise channel with inter-symbol interference and over a dispersive linear optical dual-polarization channel, showing that it can extend the application range of blind adaptive equalizers.
A probabilistic model based on VAEs was proposed for short packet wireless communication systems in \cite{Alawad22}, where the information messages are represented by the so called packet hot vectors which are inferred by the VAE latent random variables.

However, the generative capabilities of the VAE decoder can be used as a novel method to solve optimization problems.
After appropriately reflecting the optimization problem in the loss function for learning the decoder, the solution for the optimization problem can be obtained by applying various latent inputs during learning.
Therefore, in this paper, we apply the VAE to solve the NP-hard and non-linear IoT problem defined above.

\subsection{Contributions and Organization}

The main contributions of this paper are threefold:
\begin{itemize}
    \item First, we propose a VAE-based scalable and unsupervised machine-learning scheme that can determine the dynamic relay topology under the regional distribution of nodes and energy, thus overcoming the need for a labeled dataset or the lack of scalability of the model shown in the previous works using supervised learning.
    \item Second, we propose a novel backward-pass based rate evaluation method, called "Packet-Tracing"(PT), which can properly and concisely assess the output of our VAE scheme, thereby giving a proper direction for training without the need of a skilled and specified agent and environment modeling, which is inevitable in the previous works using DRL.
    \item Lastly, we propose an iterative balancing (IB) algorithm for time slot allocation over a time-division multiplexing access (TDMA) based system, which ultimately gives the solution on both topology planning and time slot planning for the formulated max-min optimization problem regarding fairness.
\end{itemize}

The remainder of this paper consists of the following.
In Section II, we define our system model and formulate a max-min optimization problem: our problem to solve. 
In Section III, from the problem formulation, we present a topology algorithm based on a VAE scheme and our novel backward-pass based rate evaluation method, and finally, a time slot allocation algorithm that maximizes the max-min fairness of nodes under a TDMA system.
In Section IV, we present operation details of our proposed scheme and perform numerical simulations and analysis which confirm the sub-optimality of our proposed scheme, leading to our conclusions in Section V.

%%%%%%%%%%%%%%%%%%%%%%%%%%%%%%%%%%%%%%%%%%%%%%%%%%%%%%%%%%%%%%%%%%%%%%%%%%%%%
\section{System Model and Problem Formulation}

\subsection{System Model}

\begin{figure}[t]
	\centering
	\includegraphics[width=0.72\linewidth]{./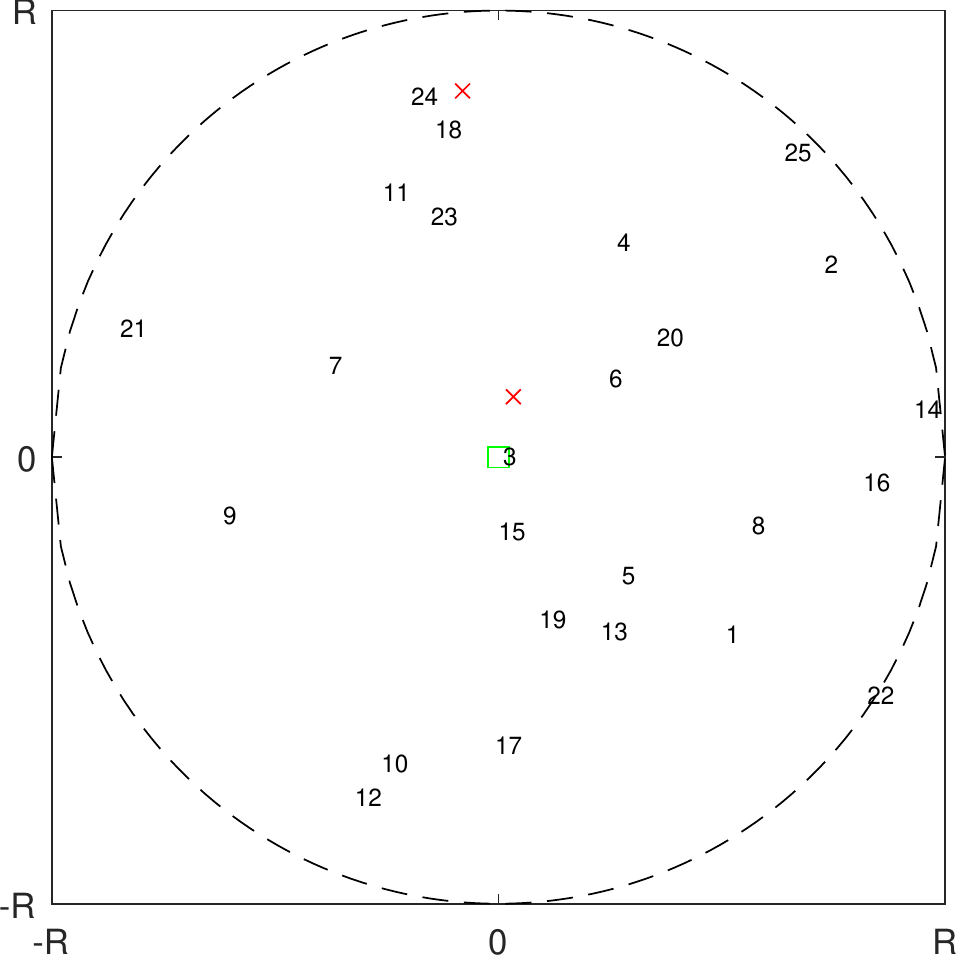} 
	\caption{Network example ({number} : IoT Nodes, \textcolor{red}{X} : Power Beacons, \textcolor{green}{$\square$} : Sink  )} 
	\label{NetExample}
 \vspace{-0.5 cm}
\end{figure}

In this subsection, we describe the system model.
Here, the uplink transmissions in an IoT sensor network during the time frame $T$ is considered.
In our system model consideration, $N_d$ IoT devices (shortly, nodes) are uniformly distributed in a circle of radius $R$ centred around the sink.
Let us denote $\mathbb{N}_d=\{1,2,...,N_d\}$ as the set associated with nodes.

By indexing the packet sink with $N_d+1$, we can define $\mathbb{N}=\mathbb{N}_d \cup \{N_d+1\}$ as the communicating node set.
In addition, a TDMA system is assumed in our system model.
TDMA is an effective system to alleviate the performance of nodes placed in an inferior power/position environment\cite{Sgora15}, by allocating more time slots within the uplink frame by taking spare time slots from the nodes placed in a superior power/position environment.
The considered time frame $T$ is divided into $N_d$ time slots and allocated to each IoT device, whose time slot for each node $n$ is indicated by $t_n \in (0, T)$.
During $t_n$, node $n$ has the opportunity to transmit.

Next, we consider energy harvesting (EH) for our system model. 
In our model, each node is self-powered by harvesting energy from PBs and stores the remaining energy in a rechargeable battery with limited capacity.
Power beacons(PB) of energy can be of any type (solar, RF, or others), and only one condition applies if the RF source is considered, whose frequency band should be different from the communication band used.
This condition alleviates the possibility that the energy harvesting signals cause additional interference to the sink.
In our problem, PBs are also randomly located in the cell with radius $R$, whose transmit power is defined as $P_b$.
Let us denote $\mathbb{N}_b=\{1,2,...,N_b\}$ as the set associated with PBs.
An example of the network is presented in Fig. \ref{NetExample}, consisting of $2$ PBs and $25$ nodes around the sink.
In this paper, we assume that the distribution of nodes and PBs follow a uniform distribution for simplicity. However, other practical distributions considering urban, indoor or transportation environments are also applicable.

We also consider RF as the power source type in our system model. In this case, the harvested energy of node $n$ which is sent from the PBs can be defined as
\begin{align}
    E_{\text{rec},n}=T\sum_{i \in \mathbb{N}_b}P_b |h_{i,n}|^2 d_{i,n}^{-\alpha},
\end{align}
where $d_{i,n}$ is the distance between node $i$ and node $n$, $h_{i,n}$ is the channel between node $i$ and node $n$ which follows a complex Gaussian distribution with zero-mean and unit variance, such that $h_{i,n} \sim \mathcal{CN}(0,1)$ and $\alpha$ is the path-loss exponent.
In our system model, we consider a discrete time block-fading model, where the channel state is constant for a time frame $T$ \cite{Atapattu16}, and also assume that the instantaneous channel state information for $h_{i,n}$ is available.

Then, the harvested energy at each node depends on which EH model is considered: either linear or nonlinear \cite{Clerckx19}.
One possible nonlinear EH model to consider is the sigmoidal model  \cite{Benkhelifa21}, which was shown to fit well with the experimental results.
In this paper, a linear model is considered for simplicity, but nonlinear models can also be applied.
When the linear model is considered, the harvested energy $E_n$ is given by $E_n = \eta E_{rec,n}$, where $\eta \in [0, 1]$ is the conversion efficiency.

We also assume that the harvested energy during $T$ is consumed for signal transmission in the $t_n$ time slot in the next $T$ time window, where $\mathbf{t}=\{t_n| \forall n\in \mathbb{N}_d\}$.
Then, the transmit power can be expressed as $P_n=\frac{E_n}{t_n}$.\footnote{For simplicity, we assume that $E_n$ is used only for the transmission, not for the transmission/reception electronic circuit \cite{Lim22}.
For an accurate model, such as the first order radio model, the circuit power consumption should considered, but it will only result in a slight difference to the degree of improvement in our proposal.}
In addition, in the TDMA system, the signal-to-noise ratio from a transmitting node $n$ to a receiving node $n'$, $\Gamma_{n,n'}$, can be expressed as  $\Gamma_{n,n'}=\frac{P_n |h_{n,n'}|^2 d_{n,n'}^{-\alpha}}{N}$ where $N$ is the noise power.

\begin{figure}[t]
	\centering
	\includegraphics[width=0.6\linewidth]{./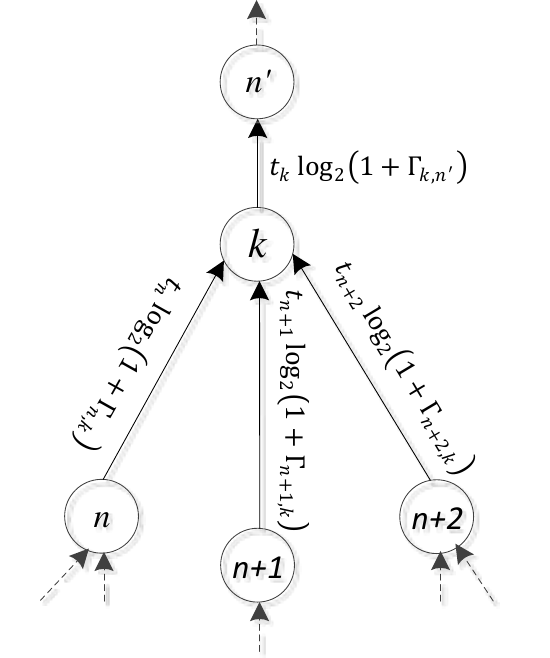} 
	\caption{Illustration of relaying data from lower nodes to a higher node} 
	\label{Concept}
 \vspace{-0.5 cm}
\end{figure}

Moreover, we define $\mathbf{c}=\{c_{n,n'}|\forall n \in \mathbb{N}_d, \forall n' \in \mathbb{N}\}$ whose element is $c_{n,n'} \in \{0, 1\}$ to represent the connection of the relay topology where $c_{n,n'}=1$ if node $n$ transmits its data and its child nodes' data to node $n'$; otherwise, $c_{n,n'}=0$.
Using the definition of $\mathbf{c}$ and Shannon's capacity, the amount of bits per frequency (bits/Hz) that node $k$ could transmit is defined as $\sum_{n'\in \mathbb{N}, n'\neq k} c_{k,n'} t_k \log _{2} (1+\Gamma_{k,n'})$, and the amount of bits/Hz that node $k$ receives is expressed as $\sum_{n\in \mathbb{N}_d, n\neq k} c_{n,k} t_n \log _{2} (1+\Gamma_{n,k})$ (as shown in Fig. \ref{Concept}).
Then, the difference between the two could be considered as the amount of bits/Hz the node $k$ itself can transmit, denoted as $B_k(\mathbf{c},\mathbf{t})$, and can be expressed as follows

\begin{align}
B_k(\mathbf{c},\mathbf{t}) = 
\sum_{n'\in \mathbb{N}} c_{k,n'} t_k \log _{2} (1+\Gamma_{k,n'}) \nonumber\\
- \sum_{n\in \mathbb{N}_d, n\neq k} c_{n,k} t_n \log _{2} (1+\Gamma_{n,k}) .\label{B_calculate}
\end{align}

\subsection{Problem Formulation}
In this subsection, we formulate the optimization problem based on the system model described above.
The optimization problem for maximizing the minimum of $B_k$ among all nodes, $\mathbb{N}_d$, can be formulated as follows:
\begin{align}
   \text{(P1)}  \quad \max_{\mathbf{c}, \mathbf{t}} \, \min_{k} \quad & B_{k}(\mathbf{c},\mathbf{t})\label{p1}\\
\textrm{s.t.} \quad & \frac{1}{T} \sum_{n \in \mathbb{N}_d} t_n =1, \label{c1}\\
  & \sum_{n'\in \mathbb{N}} C_{n,n'} =1,\,\, \forall n \in \mathbb{N}_d,\label{c3}\\
  & (\mathbf{C}^{N_d})_{n,N_d+1} = 1, \forall n \in \mathbb{N}_d \label{c4} .
\end{align}
where $\mathbf{C}$ is the extended adjacency matrix of topology, whose $C_{n,n'} \in \{ 0,1 \}$, $\forall n\in \mathbb{N}_d$, $\forall n'\in \mathbb{N}$, $C_{N_d+1,n}=0$, $\forall n\in \mathbb{N}_d$ and  $C_{N_d+1,N_d+1}=1$.
(\ref{c1}) guarantees that the sum of $t_n$ for all nodes is $T$.
(\ref{c3}) indicates that the outward link of each node is only connected to one of the other nodes including the sink.
(\ref{c4}) means that every node is connected to the sink in the end.
\footnote{According to the adjacency matrix $\mathbf{A}$, $n$-th matrix multiplication of adjacency the matrix element $(\mathbf{A}^n)_{i,j}$ indicates whether $n$-hop paths exists from node $i$ to node $j$.
Since the adjacency matrix mentioned here has been extended to have a self-cycle at the packet sink,we include all $m$-hop paths while regarding the extended adjacency matrix multiplication $(\mathbf{A}^n)_{i,j}$ where ${m} \leq{n}$.}

This optimization problem is NP-hard; therefore, an exact algorithm demands enormous computational effort.
However, by decoupling the problem into two sub-problems, we can transform the original problem with $\mathbf{c}$ and $\mathbf{t}$ to the single variable problem with $\mathbf{c}$.

For a given $\mathbf{c}$, the problem (P1) can be transformed to
\begin{align}
     \text{(P1-1)}\quad \max_{\mathbf{t}} \, \min_{k} \quad & B_{k}(\mathbf{c},\mathbf{t})\label{p2}\\
\textrm{s.t.} \quad & \frac{1}{T} \sum_{n \in \mathbb{N}_d} t_n =1. 
\end{align}
Since the result of the IB time slot allocation algorithm proposed in the next section satisfies $\max_{k} B_{k}(\mathbf{c},\mathbf{t})=\min_{k} B_{k}(\mathbf{c},\mathbf{t})$ under the given $\mathbf{c}$, we can always find $\mathbf{t}$ satisfying (P1-1) by using this algorithm.
In other words, $\mathbf{t}$ could be considered as a variable determined by $\mathbf{c}$.
If we define the optimal value of (P1-1) as $B_{IB}(\mathbf{c})$, the original problem (P1) could be expressed as below:
\begin{align}
    \text{(P2)}\quad \max_{\mathbf{c}} \, \quad & B_{\text{IB}}(\mathbf{c})\label{p3}\\
\textrm{s.t.} \quad   & \sum_{n'\in \mathbb{N}} C_{n,n'} =1,\,\, \forall n \in \mathbb{N}_d,\label{C11}\\
  & (\mathbf{C}^{N_d})_{n,N_d+1} = 1, \forall n \in \mathbb{N}_d. \label{C22}
\end{align}
Then, problem (P2) is equivalent to problem (P1).
We can also ignore the constraints (\ref{C11}) and (\ref{C22}) since $\mathbf{c}$ does not satisfy those constraints, and gives $B_{\text{IB}}(\mathbf{c}) \leq 0$. 
Unfortunately, for solving (\ref{p3}), no computationally efficient method could be applied without involving exhaustive search which has a time complexity of at least $\mathcal{O}((N_d+1)^{N_d-1})$.
To avoid such time complexity limitation, we propose a VAE based Machine-Learning algorithm in the next section.

%%%%%%%%%%%%%%%%%%%%%%%%%%%%%%%%%%%%%%%%%%%%%%%%%%%%%%%%%%%%%%%%%%%%%%%%%%%%%

\section{Description of the algorithm}
In this section, we propose a VAE scheme to find $\mathbf{c^*}$ and use our novel PT algorithm to assess it. We also propose the IB time slot allocation to find $\mathbf{t^*}$, thereby obtaining the sub-optimal solution over both $\mathbf{c}$ and $\mathbf{t}$.

\subsection{Time Slot Allocation Algorithm}

\begin{algorithm}
\caption{Iterative Balancing (IB) algorithm for $\mathbf{t}^*$}\label{alg_TSalloc}
\begin{algorithmic}[1]
\STATE $t_{n}=\frac{T}{N_d}, \, \forall n\in \mathbb{N}_d$
\STATE $\delta_1 =\infty$
\WHILE {$\epsilon_1 < \delta_1$}
\STATE $i^*=\underset{k}{\arg \, \max}\, B_k(\mathbf{c},\mathbf{t})$, $j^*=\underset{k}{\arg \, \min}\, B_k(\mathbf{c},\mathbf{t})$
\STATE $\Delta=t_{i^*}$
\STATE $\delta_2=\delta_1$
\WHILE {$\Delta > 2\epsilon_2$ \& $\epsilon_1 < |\delta_2|$} 
\STATE $\Delta =\Delta/2$
\IF{$\delta_2 > 0$}
\STATE $t_{i^*}=t_{i^*}-\Delta$
\STATE $t_{j^*}=t_{j^*}+\Delta$
\STATE $\delta_2 =B_{i^*}(\mathbf{c},\mathbf{t})- B_{j^*}(\mathbf{c},\mathbf{t})$
\ELSE
\STATE $t_{i^*}=t_{i^*}+\Delta$
\STATE $t_{j^*}=t_{j^*}-\Delta$
\STATE $\delta_2 =B_{i^*}(\mathbf{c},\mathbf{t})- B_{j^*}(\mathbf{c},\mathbf{t})$
\ENDIF
\ENDWHILE
\STATE $B_{\text{max}} =\underset{k}{\max}\, B_k(\mathbf{c},\mathbf{t}), B_{\text{min}}=\underset{k}{\min}\, B_k(\mathbf{c},\mathbf{t})$
\STATE $\delta_1 =B_{\text{max}}-B_{\text{min}}$
\ENDWHILE
\STATE $B_{\text{IB}}(\mathbf{c})=B_{\text{min}}$
\end{algorithmic}
\end{algorithm}

In this subsection, we propose an IB time slot allocation algorithm which derives the optimal $\mathbf{t}$ satisfying (P1-1) under a given $\mathbf{c}$, thus maximizing $B_{\text{min}}$.
Firstly, we define the hyper-parameter for algorithm, $\epsilon_1$ and  $\epsilon_2$, which are the upper bound of the difference between $B_{\text{max}}$ and $B_{\text{min}}$ and the minimum allocatable time slot, respectively.
Additionally, the condition $\epsilon_2 \ll T$ and $ B_k(\mathbf{c},2 \epsilon_2) < \epsilon_1$ for $\forall k \in \mathbb{N}_d$ 
must be met to guarantee the convergence of our algorithm.

The key principle of the proposed algorithm is to repeatedly reduce the difference between $B_{\text{max}}$ and $B_{\text{min}}$.
First, under the initial $\mathbf{t}$ assignment, we find $i^*=\underset{k}{\arg \, \max}\, B_k(\mathbf{t})$ and $j^*=\underset{k}{\arg \, \min}\, B_k(\mathbf{t})$.
Then, the bisection algorithm is repeatedly applied to make the amount of bits/Hz between node $i^*$ and $j^*$, $\delta_2$, less than $\epsilon_1$, or to make the time slot to be delivered between node $i^*$ and $j^*$, $\Delta/2$, less than $\epsilon_2$. 
The bisection algorithm then sets $\Delta = \Delta/2$ and delivers the time slot between node $i^*$ and $j^*$ by reducing the time slot of the node which has a superior $B_k(\mathbf{t})$ by $\Delta$ and instead allocates that reduced $\Delta$ to the node which has an inferior $B_k(\mathbf{t})$.

When the bisection algorithm of the two selected nodes is finished, the same procedure is repeated by obtaining the newly founded node $i^*$ and node $j^*$ under the condition depicted above.
This procedure is performed until $\delta_1=B_{\text{max}}-B_{\text{min}}$ becomes less than than $\epsilon_1$.
A detailed description of the process described is given in Alg. \ref{alg_TSalloc}.
Furthermore, the convergence proof of Alg. \ref{alg_TSalloc} is given as follows.

\begin{figure*}
	\centering
	\includegraphics[width=1.0\linewidth]{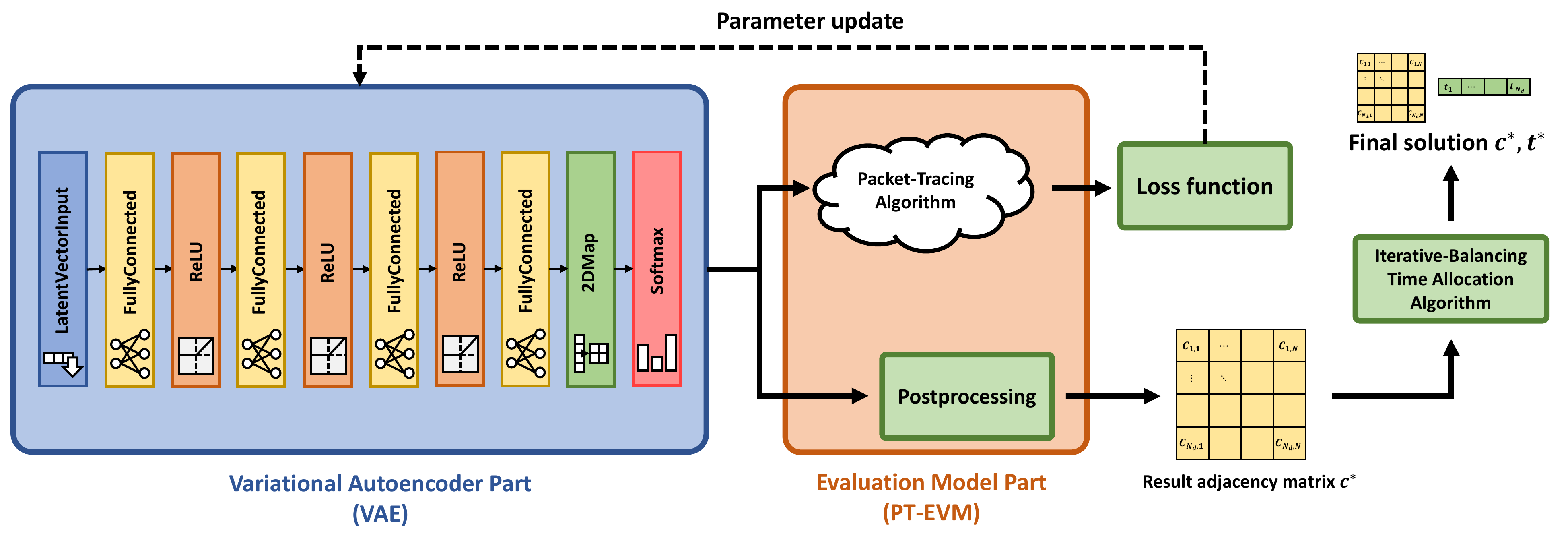} 
	\caption{Structure flow chart of the proposed scheme to find $\mathbf{c^*}$ and $\mathbf{t^*}$.} 
	\label{struct}
 \vspace{-0.5 cm}
 \end{figure*}

\begin{pro}
Let us define $B_{\text{max},k}$, $B_{\text{min},k}$ and $\delta_{1,k}$ as $B_{\text{max}}$, $B_{\text{min}}$ and $\delta_1$ in the $k$-th iteration of the outer loop which consists of lines 3 to 21 in Alg. \ref{alg_TSalloc}, respectively.
Then, $\delta_{1,k} = (B_{\text{max},k}-B_{\text{min},k}) \leq \epsilon_1$ is satisfied for $\exists k$ in the total outer loop iteration count domain.
\end{pro}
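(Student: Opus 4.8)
The plan is to read this as a termination claim and bound both loops. First I would dispatch the inner bisection loop (lines 7--19): since $\Delta$ is initialized to $t_{i^*}$ on line 5 and halved at the top of every pass on line 8, after $m$ passes $\Delta = t_{i^*}/2^{m}$, so the guard $\Delta > 2\epsilon_2$ forces termination after at most $\lceil \log_2(T/(2\epsilon_2)) \rceil$ passes (using $t_{i^*}\le T$). Two bookkeeping facts come with this: (a) every pass transfers a common amount $\Delta$ from one of $t_{i^*},t_{j^*}$ to the other, so $\sum_{n} t_n = T$ is invariant and, since the total ever subtracted from either coordinate telescopes to at most its entry value, both coordinates stay strictly positive (indeed bounded below by a fixed positive multiple of $\epsilon_2$), keeping $\mathbf t$ feasible for (P1-1); and (b) because $t \mapsto t\log_2(1 + a/t)$ is increasing for $a>0$, shifting time from $i^*$ to $j^*$ is monotone in the sense that it can only decrease $B_{i^*}(\mathbf c,\mathbf t)$ and increase $B_{j^*}(\mathbf c,\mathbf t)$, so the loop behaves as a genuine bisection on $\delta_2=B_{i^*}-B_{j^*}$ and exits either with $|\delta_2|\le\epsilon_1$ or with $t_{i^*}$ pushed down to $\le 2\epsilon_2$.

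Next I would analyze the outer loop through the potential $\delta_{1}=B_{\text{max}}-B_{\text{min}}$. Fix an outer round with selected nodes $i^*=\arg\max_k B_k$ and $j^*=\arg\min_k B_k$, so that at entry $B_{i^*}=B_{\text{max},k}$ and $B_{j^*}=B_{\text{min},k}$. I would argue that after the inner loop finishes, the updated values $B_{i^*},B_{j^*}$ lie, up to an $\epsilon_1$ slack, inside $[B_{\text{min},k},B_{\text{max},k}]$: in the exit case $|\delta_2|\le\epsilon_1$ this follows from monotonicity fact (b); in the exit case $t_{i^*}\le 2\epsilon_2$ it follows from the hyper-parameter hypothesis $B_k(\mathbf c,2\epsilon_2)<\epsilon_1$ for all $k$, which forces $B_{i^*}<\epsilon_1$. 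Hence the recomputed $\delta_{1,k+1}=B_{\text{max}}-B_{\text{min}}$ is no larger than $\delta_{1,k}$; since $\delta_{1,k}\ge 0$ always, the sequence $(\delta_{1,k})$ is non-increasing and therefore convergent.

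To finish, I would rule out a limit $\delta^{\infty}>\epsilon_1$ by a compactness and continuity argument: the configurations $\mathbf t^{(k)}$ all live in the compact set of feasible slot vectors whose coordinates are bounded below by a fixed positive multiple of $\epsilon_2$, so a subsequence converges to some $\mathbf t^{\infty}$ with $B_{\text{max}}(\mathbf t^{\infty})-B_{\text{min}}(\mathbf t^{\infty})=\delta^{\infty}>\epsilon_1$; then one more outer round applied at $\mathbf t^{\infty}$ would strictly shrink the spread by a positive amount --- by the intermediate value theorem together with the hyper-parameter condition, the $\arg\max$ node's $B$-value can always be driven strictly below the $\arg\min$ node's by moving time between them --- contradicting $\delta_{1,k}\downarrow\delta^{\infty}$ by continuity. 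Therefore $\delta_{1,k}\le\epsilon_1$ for some finite $k$, which is the assertion.

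The main obstacle is the non-increase claim in the second step. Transferring time between $i^*$ and $j^*$ does not only change $B_{i^*}$ and $B_{j^*}$: since each node's relaying burden depends on its children's slot lengths, the round also perturbs $B_{p(i^*)}$ and $B_{p(j^*)}$, where $p(\cdot)$ denotes the parent in the relay tree determined by $\mathbf c$ (e.g. reducing $t_{i^*}$ lowers the load its parent must relay, raising $B_{p(i^*)}$). One must show these side-effects cannot raise the new maximum above $B_{\text{max},k}$ nor lower the new minimum below $B_{\text{min},k}$ --- for instance by bounding each parent's perturbation by the very decrease in $B_{i^*}$ (resp.\ increase in $B_{j^*}$) that caused it, so the perturbed values remain inside the old interval, or, if that bound is too weak, by replacing $\delta_1$ with a smoother Lyapunov function such as $\sum_k (B_k-\overline{B})^2$ whose per-round decrease is unconditional. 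Controlling this coupling, rather than the routine bisection estimates, is where the real work of the proof lies.
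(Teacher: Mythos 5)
Your proposal takes a genuinely different route from the paper --- monotone non-increase of $\delta_1$ plus a compactness/continuity argument to exclude a limit above $\epsilon_1$ --- whereas the paper proves termination by exhibiting a \emph{uniform} progress constant: Lemma~1 uses the concavity and monotonicity of $t\mapsto t\log_2(1+\tau_i/t)$ together with the minimum transfer size $\epsilon_2$ to produce a single $\sigma>0$ (namely $B_{i^*}(T+\epsilon_2)-B_{i^*}(T)$) such that every outer round that executes the inner loop satisfies $\delta_{1,k}\geq\delta_{1,k+N_\delta}+\sigma$; combined with $\delta_{1,k}\geq 0$ this forces $\delta_{1,k}\leq\epsilon_1$ after finitely many rounds, with no compactness needed. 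The degenerate rounds (inner loop not entered) are handled exactly as you do, via $t_{i^*,k}\leq 2\epsilon_2 \Rightarrow B_{i^*}<\epsilon_1$ or $|\delta_2|\leq\epsilon_1$ at entry.

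However, your argument as written has two genuine gaps. First, the step your whole outer-loop analysis rests on --- that $\delta_{1,k+1}\leq\delta_{1,k}$ --- is precisely the step you defer to the end and do not prove; you correctly identify that moving time between $i^*$ and $j^*$ perturbs the $B$-values of other nodes through the relaying terms in (\ref{B_calculate}), but identifying the obstacle is not the same as overcoming it, and neither of your two suggested fixes (bounding the parent's perturbation by the child's change, or switching to $\sum_k(B_k-\overline{B})^2$) is carried out or obviously works. Second, even granting non-increase, the concluding contradiction ``one more round at $\mathbf t^{\infty}$ strictly shrinks the spread, contradicting $\delta_{1,k}\downarrow\delta^{\infty}$ by continuity'' requires the one-round update map to be continuous and the per-round shrinkage near $\mathbf t^\infty$ to be bounded away from zero; the map is discontinuous (the $\arg\max$/$\arg\min$ selection and the discrete bisection both jump), so strict decrease at the limit point does not transfer to the tail of the sequence without a uniform lower bound on the decrease --- which is exactly the quantity the paper's Lemma~1 supplies and your proof lacks. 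In short, you would need to either prove a uniform progress estimate (at which point compactness becomes unnecessary and you recover the paper's argument) or substantially strengthen both the monotonicity and continuity claims.
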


\begin{proof}
Let us define the time slot allocated to the node $i$ and the amount of bits/Hz of node $i$ at the beginning of the $k$-th iteration as $t_{i,k}$ and $B_{i,k} = B_{i}(t_{i,k})$, respectively.
In this notation, $B_{\text{max},k}$ and $B_{\text{min},k}$ at the beginning of the $k$-th iteration can be written with $B_{i^*}(t_{i^*,k})$ and $B_{j^*}(t_{j^*,k})$, respectively.

Conversely, termination of the inner loop consists of lines 7 to 18 in Alg. \ref{alg_TSalloc} indicating that the condition $\Delta \leq 2\epsilon_2$ or the condition $\epsilon_1 \geq |\delta_2|$ has been met.
If at least one iteration of the inner loop has been performed, we can find such proper positive constant $\sigma$ among all outer loop iterations that satisfy
$\sigma < \text{min}(B_{\text{max},k} - B_{i^*,k+1} ,B_{j^*,k} - B_{\text{min},k+1})$ and $\sigma < \epsilon_1$ according to lemma \ref{lemma_1}.
Since $\Delta > \epsilon_2$ is always guaranteed, it eventually guarantees the existence of constant $\sigma$ and $N_\delta < N/2$ satisfying $\delta_{1,k}\geq \delta_{1,k+N_{\delta}} + \sigma$ for $\forall k$ in the total outer loop iteration count domain.

Thus, in this case, the two expressions $\delta_{1,k} \geq \delta_{1,k+N_{\delta}} + \sigma$ and $\delta_{1,k} \geq 0$ are always satisfied for $\forall k$ and constant $\exists \sigma > 0$ and $\exists N_{\delta} < N/2$ eventually leads to the satisfaction of the condition $\delta_{1,k} \leq \epsilon_1$ for some $k$.

However, such termination condition of the inner loop could be met at the beginning of the inner loop iteration; hence, the iteration of the inner loop may not be performed.
Nevertheless, neither condition can be met at the beginning of the first iteration of the inner loop.

Firstly, satisfying the former condition $\Delta \leq 2\epsilon_2$ means that $t_{i^*,k} \leq 2\epsilon_2$ since we set the initial $\Delta = t_{i^*}$.
Subsequently, the inequality $\delta_{1,k} \leq B_{i^*}(t_{i^*,k}) \leq B_{i^*}(2 \epsilon_2) < \epsilon_1 $ is justified since condition $B_i(\mathbf{c},2 \epsilon_2) < \epsilon_1, \forall i \in \mathbb{N}_d$ is given.
Next, satisfying the latter condition $\epsilon_1 \geq |\delta_2|$ implies that $\epsilon_1 \geq \delta_{1,k}$ has been met, since we set the initial $\delta_2 = \delta_{1,k}$.

Therefore, it can be shown that both cases already satisfy the termination condition of the outer loop, eventually satisfying the proposition.

\end{proof}

\begin{lem} \label{lemma_1}
A positive constant $\sigma$ satisfying $\sigma < B_{i}(t_A)-B_{i}(t_B)$ and $\sigma < \epsilon_1$ for $\forall i \in \mathbb{N}_d$ always exists while $B_{n}(t_n) = t_n \text{log}_2(1+\Gamma_{n,n'})$ is given with $\Gamma_{n,n'}=\frac{P_n |h_{n,n'}|^2 d_{n,n'}^{-\alpha}}{N}$ and $P_n = \frac{E_n}{t_n}$, respectively. This satisfies $\epsilon_2 \leq t_A-t_B $ and $c_{n,n'} = 1$.

\end{lem}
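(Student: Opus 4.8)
\emph{Proof plan.} The plan is to reduce the claim to the strict monotonicity of each $B_n$ and then invoke a compactness argument. Under the stated hypothesis $c_{n,n'}=1$, node $n$ actually transmits to $n'$, so the given expressions specialise to
\[
B_n(t)=t\log_2\!\Big(1+\frac{a_n}{t}\Big),\qquad a_n:=\frac{E_n\,|h_{n,n'}|^2\,d_{n,n'}^{-\alpha}}{N}>0,
\]
where $a_n$ is a strictly positive constant independent of $t$. What has to be produced is a single $\sigma>0$ lying below $\epsilon_1$ and below $B_i(t_A)-B_i(t_B)$ simultaneously for every $i\in\mathbb{N}_d$ and every admissible pair, i.e. every $(t_A,t_B)$ with $0\le t_B\le t_A\le T$ and $t_A-t_B\ge\epsilon_2$ (the cap $t_A\le T$ being available because the arguments are TDMA time slots inside the frame of length $T$).

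First I would show that each $B_n$ is continuous and strictly increasing on $[0,T]$. On $(0,\infty)$ it is differentiable with $B_n'(t)=\log_2 e\,\big(\ln(1+x)-\tfrac{x}{1+x}\big)$, where $x:=a_n/t>0$, and the elementary inequality $\ln(1+x)>\tfrac{x}{1+x}$ for $x>0$ gives $B_n'(t)>0$; hence $B_n$ is strictly increasing on $(0,T]$. Since $\lim_{t\to0^+}t\log_2(1+a_n/t)=0$, setting $B_n(0)=0$ extends $B_n$ continuously to all of $[0,T]$ while preserving strict monotonicity. This is the only step that uses the specific form of $B_n$, and it is also where $c_{n,n'}=1$ enters: it is precisely what guarantees $a_n>0$, without which $B_n$ would be identically zero and no such $\sigma$ could exist.

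Then I would finish with the extreme value theorem. The admissible set $D:=\{(t_A,t_B):0\le t_B,\ t_B+\epsilon_2\le t_A\le T\}$ is closed and bounded in $\mathbb{R}^2$, hence compact, and it is nonempty because $\epsilon_2\ll T$. For each fixed $n$, the map $(t_A,t_B)\mapsto B_n(t_A)-B_n(t_B)$ is continuous on $D$ and, because $t_A\ge t_B+\epsilon_2>t_B$ and $B_n$ is strictly increasing, strictly positive there; therefore it attains a positive minimum $m_n>0$ on $D$. As $\mathbb{N}_d$ is finite, $m:=\min_{n\in\mathbb{N}_d}m_n>0$, and any $\sigma$ with $0<\sigma<\min(m,\epsilon_1)$ — for instance $\sigma=\tfrac{1}{2}\min(m,\epsilon_1)$ — satisfies $\sigma<\epsilon_1$ and $\sigma<B_i(t_A)-B_i(t_B)$ for all $i\in\mathbb{N}_d$ and all $(t_A,t_B)\in D$, which is the assertion.

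The only genuinely non-routine ingredient is the monotonicity inequality driving $B_n'>0$, together with the mild care needed to extend $B_n$ continuously to $t=0$ so that $D$ may be taken compact even though the IB algorithm can drive a time slot arbitrarily close to $0$; one should also note that the bound $t_A\le T$ cannot be dropped, since $B_n'(t)\to0$ as $t\to\infty$ would force the infimum of $B_n(t_A)-B_n(t_B)$ over an unbounded range to be $0$. If an explicit value is preferred, the concavity of $B_n$ (equivalently, $B_n'$ decreasing) gives $B_n(t_A)-B_n(t_B)\ge(t_A-t_B)B_n'(t_A)\ge\epsilon_2 B_n'(T)$, so one may simply take $\sigma=\tfrac{1}{2}\min\!\big(\epsilon_1,\ \epsilon_2\min_{n\in\mathbb{N}_d}B_n'(T)\big)$.
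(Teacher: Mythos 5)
Your proof is correct, but it takes a genuinely different route from the paper's. The paper argues constructively: it picks the explicit constant $\sigma = B_{i^*}(T+\epsilon_2)-B_{i^*}(T)$, where $i^*$ minimizes this increment over $i\in\mathbb{N}_d$, uses concavity of $B_i$ to show that any increment over an interval of length at least $\epsilon_2$ contained in $(0,T)$ exceeds the increment over $[T,T+\epsilon_2]$, and then obtains $\sigma<\epsilon_1$ from the algorithm's standing hypothesis $B_i(\mathbf{c},2\epsilon_2)<\epsilon_1$ via the chain $B_{i^*}(T+\epsilon_2)-B_{i^*}(T)<B_{i^*}(2\epsilon_2)-B_{i^*}(\epsilon_2)<B_{i^*}(2\epsilon_2)<\epsilon_1$. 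You instead run a non-constructive compactness argument: strict monotonicity plus the extreme value theorem on the compact admissible set $D$ yields a positive minimum increment $m$, and $\sigma=\tfrac12\min(m,\epsilon_1)$ finishes. What your route buys is self-containedness — you actually prove the monotonicity ($\ln(1+x)>x/(1+x)$) and the continuous extension to $t=0$ that the paper merely asserts, and you do not need the precondition $B_i(\mathbf{c},2\epsilon_2)<\epsilon_1$ at all, since you cap $\sigma$ by $\epsilon_1/2$ directly. What the paper's route buys is an explicit, computable $\sigma$ tied to quantities already present in the algorithm's hypotheses; your closing remark, $\sigma=\tfrac12\min\bigl(\epsilon_1,\ \epsilon_2\min_{n}B_n'(T)\bigr)$ obtained from concavity, recovers essentially that constructive flavor and is arguably cleaner than the paper's choice. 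Both arguments correctly rely on the cap $t_A\le T$ coming from the time-frame constraint, which you rightly flag as indispensable.
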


\begin{proof}
Let us denote $B_i(t) = t \text{log}_2(1+\frac{\tau_i}{t})$ for some constant $\tau_i$ since $E_n, |h_{n,n'}|^2, d_{n,n'}^{-\alpha}, N$ is constant. Next, we denote the node index $i^*$ which satisfies the condition $B_{i^*}(T+\epsilon_2) - B_{i^*}(T) \leq B_{i}(T+\epsilon_2) - B_{i}(T)$ for $ \forall i \in \mathbb{N}_d$. 

Note that function $B_i(t)$ is a monotonic increasing function and $\frac{\partial^2}{\partial t^2} B_i(t) < 0$ in domain $t \in \mathbb{R^+_*}$.
Then, the inequality $B_i(T+\epsilon_2) - B_i(T) < B_i(t_A) - B_i(t_B), \forall i \in \mathbb{N}_d$ is justified since condition (\ref{c1}) exists, restricting $t$ in $0<t<T$.

Also, since condition $B_i(\mathbf{c},2 \epsilon_2) < \epsilon_1, \forall i \in \mathbb{N}_d$ is given, the inequality $B_{i^*}(T+\epsilon_2) - B_{i^*}(T) < B_{i^*}(\mathbf{c},2 \epsilon_2) - B_{i^*}(\mathbf{c},\epsilon_2) < B_{i^*}(\mathbf{c},2 \epsilon_2) < \epsilon_1$ is justified.

Thus, it is sufficient to set the positive constant $\sigma$ satisfying the lemma above as $B_{i^*}(T+\epsilon_2)-B_{i^*}(T)$.

\end{proof}

\subsection{Machine Learning Based Topology Algorithm}

In this subsection, we propose a machine-learning based VAE model and a backward-pass based rate evaluation model called the Packet-Tracing evaluation model (PT-EVM), inspired by the ray tracing method used in 3-D graphics rendering.
Our proposed model to find $\mathbf{c^*}$ consists of two parts: firstly, deriving $\mathbf{c}^*$ using the VAE part and assessing the derived $\mathbf{c}^*$ with the PT-EVM part, and secondly, training the VAE part with the assessment from the PT-EVM part.

Since (P2) has non-linearity and discontinuity characteristics due to the nature of the Shannon-capacity formula and the mixed-integer problem, it is difficult to find the sub-optimal $\mathbf{c}^*$ using analytical methods.
Given these considerations, we propose machine-learning as a method to overcome these limitations.
However, there are some restrictions in applying supervised-learning in our formulated problem. The dataset used for supervised learning, such as a pair set of positional distributions among nodes and PBs and optimal topology, cannot be obtained due to the characteristics of our formulated problem.
Moreover, it is difficult to build such dataset in an exhaustive way due to the exponential growth of computational time with the number of nodes.

One alternative to consider is a reinforcement learning method that retrieves rewards from interacting with a physical network by applying resulting topology from the algorithm to the physical network, then analyzes real achievable rates as a reward.
However, this method has some drawbacks.
When a model is in the learning phase, topology output is imperfect, which is an essential procedure for receiving a reward to use in reinforcement learning. As a result, initial performance degradation while learning in practical use is inevitable. 
Moreover, as depicted in Section I, the IoT network environment has ultra-low speed intermittent transmission properties, meaning that there is a very low frequency of rewards which can be used in reinforcement learning, thus maximizing the drawback depicted above. These drawback make applying reinforcement learning improper.

Taking these limitations into consideration, we propose a VAE based generative model which can be used in an unsupervised learning manner.
Typically, generative models are used to resemble an observed dataset and embed the dataset to a high-dimensional space, called the latent vector space, thus gaining the ability to generate new data points based on observed distribution by simply modifying the newly latent vector.
Considering this, our generative model and unsupervised-learning hybrid scheme slightly modifies these original structures, which explores latent vector space and its mapped data point space to derive the sub-optimal solution $\mathbf{c}^*$.
However, to realize the proposed concept depicted above, a proper assessment of the resulting output $\mathbf{c}$ is necessary, which enables exploration of the latent vector space and derivation of the sub-optimal solution.

Therefore, a simulation-based assessment for the resulting topology is essential, even though it requires both brevity but sufficient detail to provide precision network modeling. 
However, conventional network modeling and simulation (M\&S) tools such as NS-3 or the Riverbed Modeler requires excessive computing overhead since they simulate the entire network operation, such as buffer management, TCP/IP protocol procedure, wireless channel emulation, packet switching among interacting interfaces, etc.
Such simulation overhead is unnecessary since our assessment requirement is sufficient to examine the achievable rate over nodes with regards to the congestion and distribution of PBs and nodes.
One imaginable solution is to subtract the sum of the inbound capacity from the sum of the outbound capacity on the basis of the definition of Shannon's capacity, depicted as (\ref{B_calculate}). 
Unfortunately, this naive solution cannot consider how much of the rate(packet) will be inbound to a specific node and how much of the rate(packet) can be handled in the destination node regarding congestion.

Since these limitations and requirements cannot be fully reflected through typical solutions, we propose a PT algorithm to enable congestion-aware rate assessment. 
In our PT algorithm, we sequentially assign the achievable rate of the node starting with the packet sink and towards the backward-pass manner, like the ray-tracing algorithm in 3-D graphics rendering.
Ray-tracing works by tracing a path from a viewpoint (camera) to an object in virtual 3-D graphics space, and the ray propagates under photonics simulation, eventually hitting the light source. 
Finally, the ray shows the valid path for the light to propagate \cite{Steven05}.
Since the ray-tracing algorithm does not consider rays which don't reach the camera, the computation cost is drastically reduced.
Similarly, the backward-pass property of our packet-tracing algorithm provides both feasible and concise rate assessment similar to that of the ray-tracing algorithm.

As shown in Fig. \ref{struct}, the proposed unsupervised-learning model consists of two parts: mainly the VAE part and the PT-EVM part. 
To summarize, this proposed concept can be implemented by entering a random latent vector input into VAE, assessing the output $\mathbf{c}$, and updating the parameters of VAE using these assessments iteratively.
This procedure eventually focuses and maps prior latent vector space to the sub-optimal solution data point and draws $\mathbf{c}^*$ without the dataset necessary for supervised learning.
A detailed description of each part of scheme is depicted in the subsections below.

\subsubsection{VAE structure}
The VAE part generates a topology for a given latent vector input.
The left side of Fig. \ref{struct} depicts the VAE structure as 4 fully connected (FC) layers and 3 rectified linear unit (ReLU) layers between each FC layer, followed by a 2Dmap layer and softmax layer.

Firstly, in the FC layer, each input element $x_{i}$ is processed into the output element $y_{j}$ by the following equation:

\begin{align}
y_{j} = \sum_{i\in \mathbb{N}_I} w_{i,j}x_{i}+b_{j}, \forall{j} \in \mathbb{N}_O\label{y_j}
\end{align}

where $i$ and $j$ are the input and output element index, respectively. $\mathbb{N}_I$, $\mathbb{N}_O$, $\mathbf{w}$, and $\mathbf{b}$ are the number set of the input element, output element, weights, and biases of the layers, respectively. 
For each FC layer, the number of output elements is set in an arithmetic sequence manner, whose number of output elements of the last layer is fixed to $N_d(N_d+1)$ for topology mapping (i.e., each node in $\mathbb{N}_d$ can select the upper link among $\mathbb{N}$).

Between the FC layer, an ReLU layer is used as an activation function to provide handling on the non-linear property of the formulated problem.
In the ReLU layer, each input element $x$ is processed into the output element $y$ by the following equation: 

\begin{align}
y = \max(x,0) \label{relu}.
\end{align}

Next, a 2DMap layer is considered as a means of mapping the FC layer output to adjacency matrix form. 
The 2DMap layer maps the 1-Dimension input vector $x$ totaling a $N_d(N_d+1)$ element count to $N_d$ by the $N_d+1$ adjacency matrix, denoted as  $y_{i,j}$, as follows:

\begin{align}
y_{i,j} = x_{(N_d+1)(i-1)+j}, \forall{i} \in \mathbb{N}_d , \forall{j} \in \mathbb{N} \label{y_ij}.
\end{align}

Finally, a softmax layer is applied to make the sum of each sequential row equal to 1, so that the total outward connectivity summation for each node is equal to 1.
Furthermore, the softmax layer functions as an activation function, thus enhancing the ability to handle the non-linear property. 
The softmax layer operates for the input element $x_i$ and the output element $y_i$ in each sequential row, as follows: 

\begin{align}
y_i = \frac{\exp{x_i}}{\sum_{j\in \mathbb{N}} \exp{x_j}} , \forall i \in \mathbb{N}\label{y_i}.
\end{align}

Therefore, the VAE part results in an adjacency matrix $\mathbf{c}$ as the final output, which determines the entire topology connectivity configuration as described above.

\subsubsection{PT-EVM structure}
The PT-EVM part evaluates the resulting topology from the VAE part. 
The right side of Fig. \ref{struct} depicts the PT-EVM structure consisting of the PT algorithm module to achieve the rate required for training loss function calculation, and the post-processing module to post-process the inference result.

Similar to the ray-tracing algorithm, our PT algorithm back-propagates and assess the achievable rate budget in a backward process, starting from the packet sink.
A detailed description of the procedure of our PT algorithm is as follows.
Firstly, in line 9 of Alg. \ref{RatePT}, the focused node (i.e., a packet sink in the initial step) starts the algorithm with the granted rate budget (i.e., $\infty$ in the initial step) given in the previous step. 
Next, the focused node takes its share of rate budget, $R_{\text{self}}$, which depends on the total inbound connection status and rate budget (line 12).
We turned off auto-differentiation calculation when calculating $I, B, R_{\text{self}}$, thus preventing the assessment of the other nodes to be affected.

Subsequently, in line 17, the focused node gives its remaining rate budget for inbound nodes.
Basically, each inbound node budget is set to their link rate.
However, if the remaining budget of the focused node, $B-R_{\text{self}}$, is less than $I$, this indicates that congestion is happening on the focused node.
In this case, each inbound node budget is reduced in proportion to $\frac{B-R_{\text{self}}}{I}$.

Then, the next unallocated node is called respectively in line 20 and these steps are repeated until a termination condition (e.g., remaining allocatable rate budget drops below a certain threshold, $B_{\text{th}}$) is satisfied.
This recursive algorithm spreads out through the inbound connections, and the sum of all granted rate budgets over various packet-rays pass a specific node, it implies the achievable rate of the node.
With this manner of calculation, we can guarantee that all granted achievable rates can be reached into the packet sink since we calculated in backward-pass from the packet sink, like in the ray of ray-tracing.

After finalizing the backward calculation of all packet-rays, we calculate the net packet-ray for all edges of the topology since a single packet-ray does not consider other packet rays; thus, net rate calculation is essential to PT-EVM modeling (lines 5-7).
Alg. \ref{RatePT} depicts the implementation details of the proposed packet-tracing algorithm, where $I$, $B$, and $R_{\text{self}}$ are the sum of the inbound rate, the granted rate budget through the packet-ray, and the allocated rate of the called node, respectively. 
$t$ is the time slot scale factor for the simulation algorithm and $B_{\text{th}}$ is the rate budget threshold for determining termination condition.

\begin{algorithm}
\caption{Calculation of $\mathbf{R}_{\text{sim}}$}\label{RatePT}
\begin{algorithmic}[1]
\STATE \textbf{main}
\STATE{$\mathbf{f} =\{f_{i}|f_{i}=0, \forall i\in \mathbb{N}_d\}$}
\STATE{$\mathbf{R} =\{R_{i,j}|R_{i,j}=0, \forall i\in \mathbb{N}_d, \forall j\in \mathbb{N}\}$}
\STATE $\mathbf{R} = \text{ RatePT}(N_d+1,\infty,\mathbf{f})$
\STATE $(\mathbf{R}_{\text{net}})_{i,j} = $relu$(R_{i,j}-R_{j,i}) , \forall i, j\in \mathbb{N}_d$
\STATE $(\mathbf{R}_{\text{net}})_{i,N_d+1} = R_{i,N_d+1} , \forall i \in \mathbb{N}_d$
\STATE $(\mathbf{R}_{\text{sim}})_{i} = \sum_{j}{(\mathbf{R}_{\text{net}})_{i,j}} , \forall i\in \mathbb{N}_d$
\STATE \textbf{end main}
\\\hrulefill
\STATE \textbf{function} RatePT$(n, B, \mathbf{f})$
\STATE{$\mathbf{R} =\{R_{i,j}|R_{i,j}=0, \forall i\in \mathbb{N}_d, \forall j\in \mathbb{N}\}$}
\STATE \textbf{if} $B<B_{\text{th}}$, \textbf{then} terminate
\STATE $R_{\text{self}} = \frac{B}{1+\sum_{j\in \mathbb{N}_d} c_{j,n}}$
\STATE $P_j=\frac{E_j}{t},  \Gamma_{j,n}=\frac{P_j d_{j,n}^{-\alpha}}{N} , \forall j \in \mathbb{N}_d$
\STATE $L_{j} = c_{j,n} t \log _{2} (1+\Gamma_{j,n}) , \forall j \in \mathbb{N}_d$
\STATE $I = \sum_{j\in \mathbb{N}_d} L_{j}$
\STATE {Turn off auto-differentiation trace on $I, B, R_{\text{self}}$}
\STATE $R_{j,n} = L_{j}  \min(1,\frac{B-R_{\text{self}}}{I}) , \forall{j}\in \mathbb{N}_d$
\STATE $f_{n} = 1$\\
\STATE{\textbf{for} $j \gets \mathbb{N}-\{ n \}$\,\,\textbf{do}}
\STATE{\textbf{\,\,\,\,if} $f_{j}=0$, \textbf{then} $\mathbf{R} = \mathbf{R} +  \text{RatePT}(j,R_{j,n},\mathbf{f}), \forall{i} \in \mathbb{N}_d$}
\RETURN {$\mathbf{R}$}
\STATE \textbf{end function}
\end{algorithmic}
\end{algorithm}

\subsubsection{Training and Inferencing of the Proposed VAE Part}

After assessing the achievable rate of each node through the PT-EVM part, we calculate the training loss function using the calculated achievable rate, and use its value in our unsupervised learning for the VAE part. 
Since the PT-EVM part performs auto-differentiation tracing along the calculation, we can use the output of the PT-EVM part into training the loss function calculation for the VAE part and use it directly as a conventional neural-network training method (e.g., stochastic gradient descent).
Our training of model is specific to the given topology which enables compatibility over an arbitrary node and PB configuration (e.g., the number of nodes/PBs and the positional distribution of nodes/PBs).
The unsupervised learning proceeds by minimizing the training loss function value $L$ which is depicted as,
\begin{align}
{L} = \frac{1}{N_d}\sum_{i \in \mathbb{N}_d}{\text{exp}(-R_{\text{sim}})_i} \label{Lossfunction}
\end{align}
where $(R_{\text{sim}})_i$ is the rate of node $i$ calculated through the PT-EVM part.

Note that the training loss function we propose does not indicate the degree of similarity or sagging with the optimal answer, but is only a performance indicator evaluated by the PT structure.
That is, the training loss value cannot be zero, as such situation implies that $(R_\text{sim})_i = \infty, \forall i \in \mathbb{N}_d$.
Also, optimal training loss may vary for various IoT network settings (e.g. the optimal training loss $L_\text{opt}$ may have a value of about 0.8 while the optimal solution of the target IoT network configuration has $(R_\text{sim})_i=0.2, \forall i \in \mathbb{N}_d$).

Additionally, since algorithms learn through interactions that occur in a specific target IoT network configuration, there is no separate training or validation dataset in this methodology: there are no applicable validation loss metrics for this methodology. 
Alternatively, to validate the training epoch without validation loss, the actual performance indicator($B_\text{min}$) was tracked during the training phase in our analysis described below.

Next, the parameters of the VAE part are updated towards minimizing the training loss function (\ref{Lossfunction}) using the Adaptive Moment Estimation (ADAM) optimizing algorithm \cite{Kingma14} as follows:
\begin{align}
g_{t+1} &\gets \nabla_{\theta}L(\theta_{t}), \label{ADAMstart}\\
m_{t+1} &\gets \beta_1 m_{t} + (1-\beta_1) g_{t+1},\\
v_{t+1} &\gets \beta_2 v_t + (1-\beta_2) g^2_{t+1},\\
\hat{m}_{t+1} &\gets m_{t+1} / (1-\beta^{t+1}_1),\\
\hat{v}_{t+1} &\gets v_{t+1} / (1-\beta^{t+1}_2),\\
\theta_{t+1} &\gets \theta_{t} -\kappa\hat{m}_{t+1}/(\sqrt{\hat{v}_{t+1}}+\epsilon),\label{ADAMend}
\end{align}
where $\theta$, $t$, $m$, $v$, $\hat{m}$, and $\hat{v}$ are the learnable parameters (such as weights and biases in the neural network), timestep, biased first moment, biased second raw moment, bias-corrected first moment, and bias-corrected second raw moment, respectively. 
$\kappa$, $\beta_1$, $\beta_2$, and $\epsilon$ are the hyper-parameters for the model, which means the learning rate, decay rate for the first moment, decay rate for the second raw moment, and the small scalar value used to prevent the divide by zero error, respectively.

\cite{Doersch16} has shown that an arbitrary simple prior latent vector distribution can be converted to a specific target manifold distribution since the first few layers of the VAE structure adequately provide such conversion. Thus, we use prior latent vector distribution as the uniform distribution, $\mathcal{U}(0,1)$, for each element during training, and measure the training loss function for the sampled result. We then finally update the parameters of the VAE structure using the loss function measured.
Through these iterations, the VAE structure maps the simple prior latent vector distribution to the specific point of the latent space which embeds the best topology that VAE can provide. It then maps that specific latent space point to the best topology.

After training, we obtain our final inference result for a specific topology by post-processing the results as follows. 
The post-processing step finds the max index ($j^*$) in an adjacency matrix over a row, which is defined by $C_{i,j^*} \geq C_{i,k},\,\forall k \in \mathbb{N}-\{j^*\}$.
Then, it zeros out other outbound connections that are inferior to one which has the maximum connectivity, satisfying the condition $C_{i,j} \in \{ 0,1 \}, i \in \mathbb{N}_d , j \in \mathbb{N} $.
This post-processing step forces the topology result to have only one outbound connection.
Such computation can be described for $ i \in \mathbb{N}_d$ as below:
\begin{align}
C_{i,j^*} &\gets 1 , \\
C_{i,k} &\gets 0 ,\, \forall k \in \mathbb{N}-\{j^*\}.
\end{align}

%%%%%%%%%%%%%%%%%%%%%%%%%%%%%%%%%%%%%%%%%%%%%%%%%%%%%%%%%%%%%%%%%%%%%%%%%%%%%

\begin{figure}
	\centering
 \mbox{
 \subfigure[Epoch=0]{\label{Work1-1}
	\includegraphics[width=0.3\linewidth]{./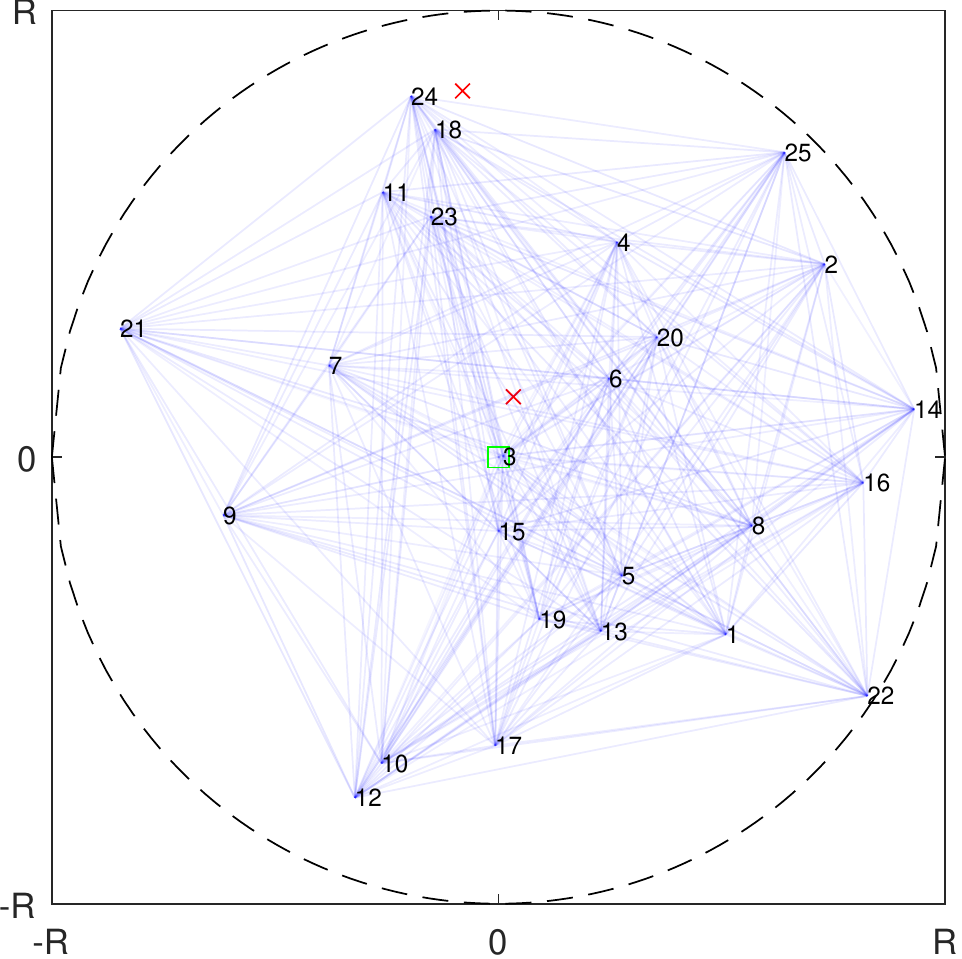} 
 }
  \subfigure[Epoch=10]{\label{Work1-2}
 \includegraphics[width=0.3\linewidth]{./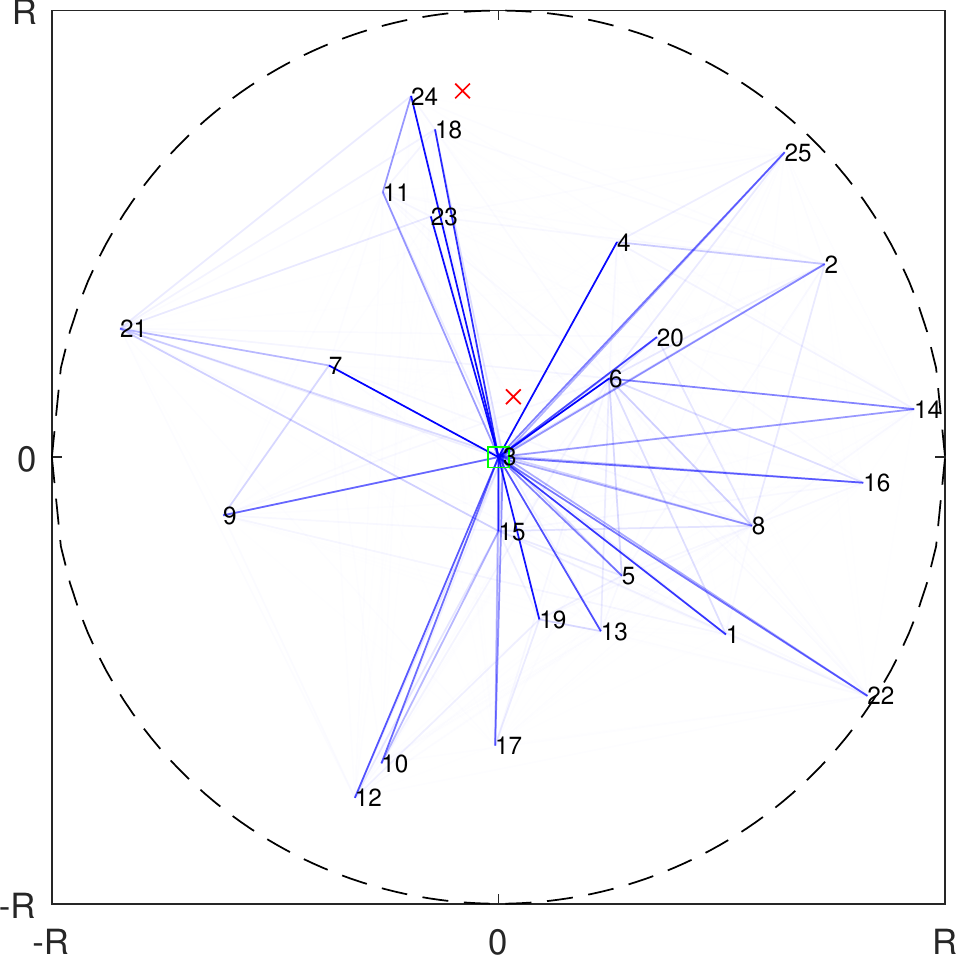} 
 }
  \subfigure[Epoch=50]{\label{Work1-3}
 \includegraphics[width=0.3\linewidth]{./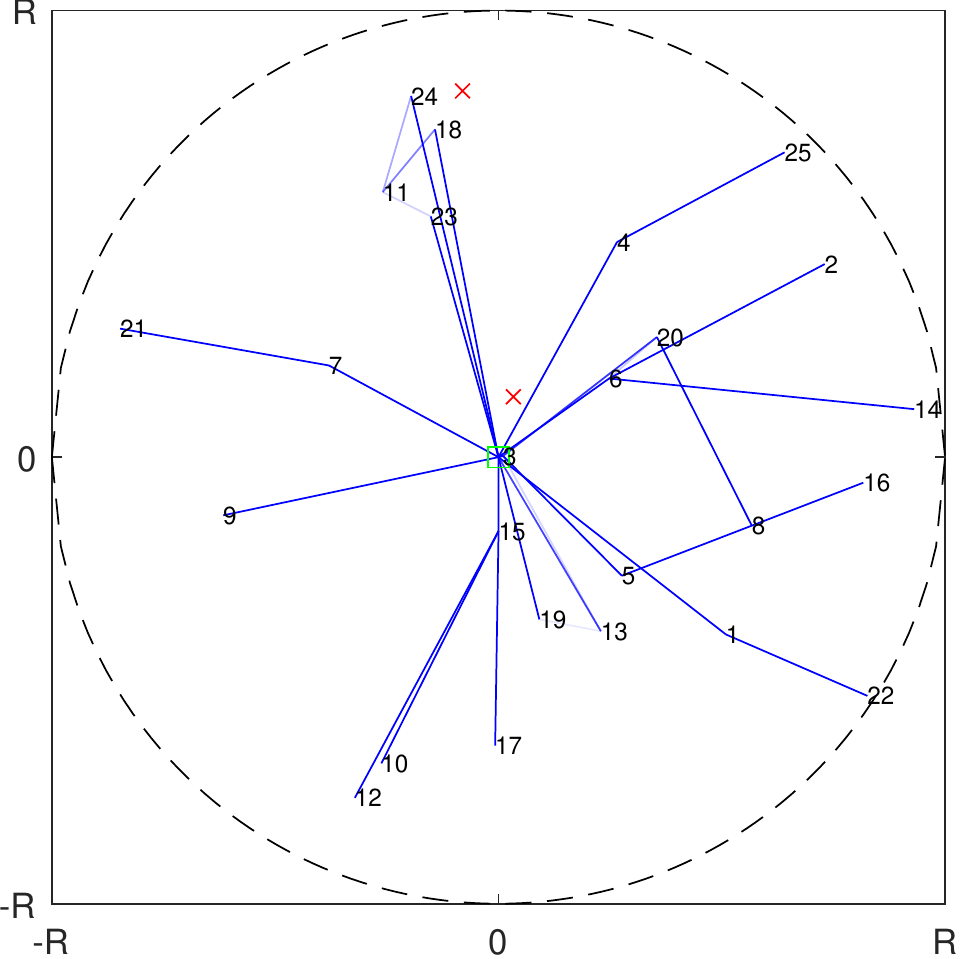} 
 }
    }
	\caption{Raw topology result by epoch on the $N_{d}=25$, $N_{b}=2$ example.}
	\label{Work1}
  \vspace{-0.5 cm}
\end{figure}

\begin{figure}[t]
	\centering
	\includegraphics[width=1.0\linewidth]{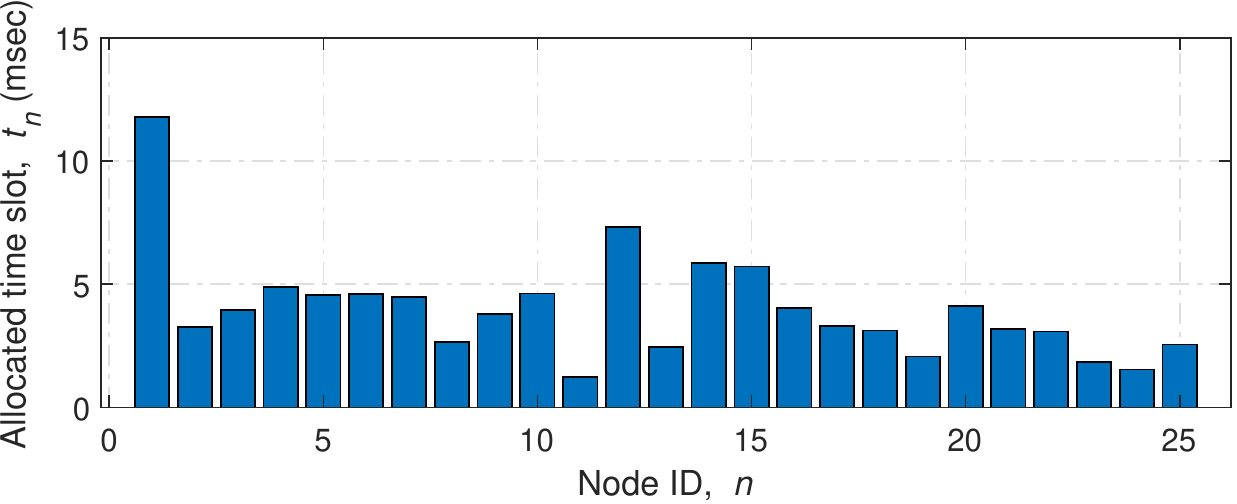} 
\vspace{-0.5cm}
	\caption{Allocated time slot on the $N_{d}=25$, $N_{b}=2$ example.}
	\label{TS1}
  \vspace{-0.5 cm}
\end{figure}

\begin{figure}[t]
	\centering
	\includegraphics[width=1.0\linewidth]{./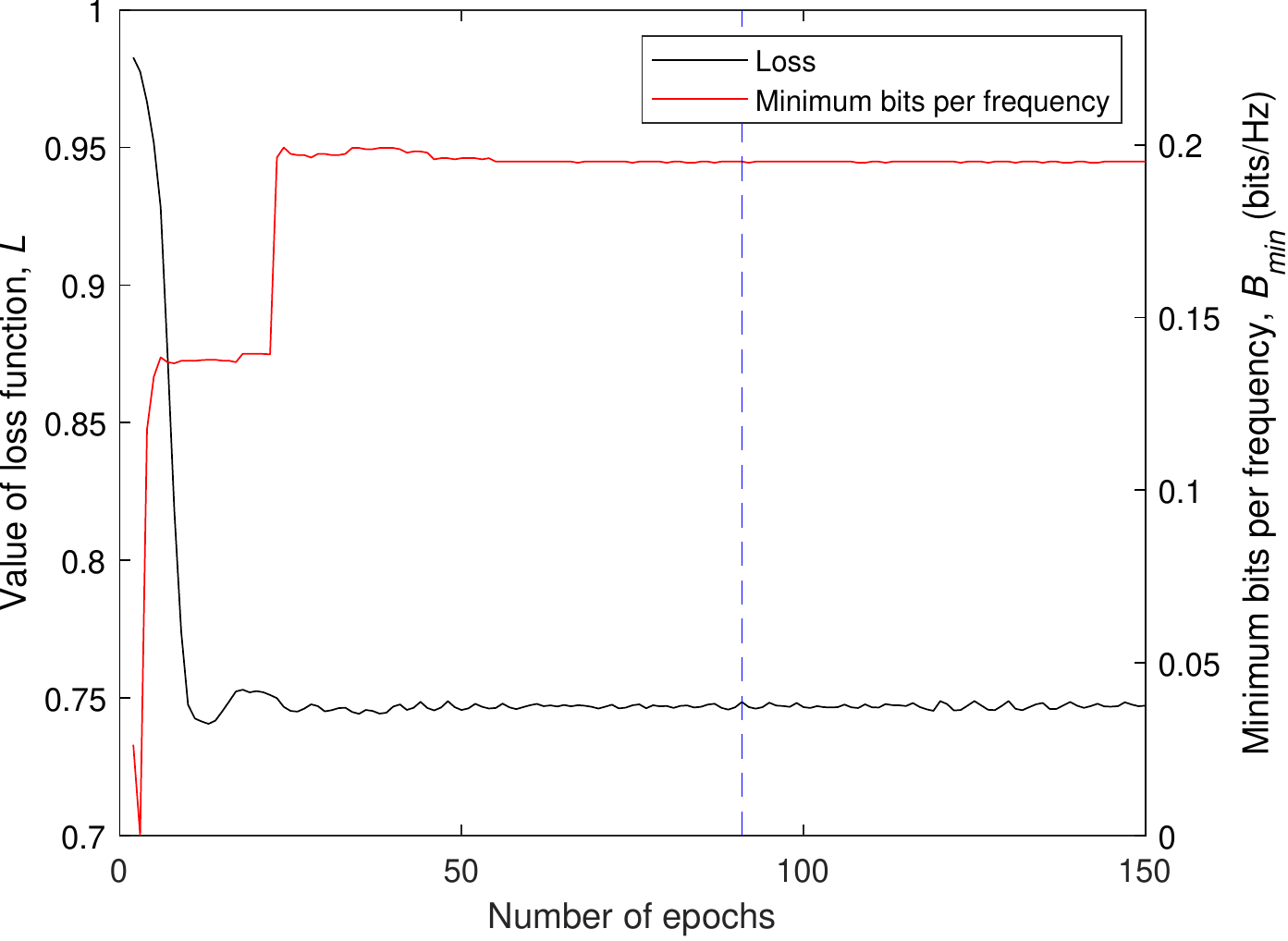} 
\vspace{-0.5cm}
	\caption{Training loss and minimum bits per frequency plot on the $N_{d}=25$, $N_{b}=2$ example.}
	\label{L1}
  \vspace{-0.5 cm}
\end{figure}

%%%%%%%%%%%%%%%%%%%%%%%%%%%%%%%%%%%%%%%%%%%%%%%%%%%%%%%%%%%%%%%%%%%%%%%%%%%%%
\section{Performance Analysis and Discussions}
\subsection{Test Environment and Parameter Configuration}

To evaluate the performance of our proposed method, we performed and evaluated simulations using the following target system model parameters.

First, the path loss exponent was assumed to be $\alpha=3$ and the bandwidth to be $BW = 125$ in kHz.
Noise power was defined as $N = -174 + NF + 10\log_{10}(BW)$ in dBm, where $NF$ is the noise figure equal to $6$ dB.
The nodes and PBs were distributed around a single packet sink with a coverage radius of $R = 0.5$ km.
Next, for the generation of wireless fading channels, $|h_{i,j}|^2$, an exponential random variable with unit mean was used.
The transmit power of the PB was set to 1 W.
The linear model was considered with $\eta=0.7$ for the energy harvesting model.
Finally, the time frame $T$ was set to $100$ milliseconds.

For our simulation, we set our VAE hyper-parameters as follows.
First, the learning rate was set to $\kappa = 0.001$, the decay rate for the first moment was set to $\beta_1 = 0.9$, the decay rate for the second raw moment was set to $\beta_2 = 0.999$, and the divide by zero prevention small scalar was set to $\epsilon = 10^{-8}$. 
Next, we set our VAE input vector size to $\left \lceil N_d\sqrt{N_d} \right \rceil$ and initialized our neural network weight using a glorot initializer \cite{Glorot10}.
Finally, we set our neural network bias to be zero initialized. 
We trained our VAE neural network until a training loss value smaller than the previous minimum training loss value was not found for $30+500/{N_d}$ epochs.

In addition, we set PT-EVM hyper parameters as follows: time slot scale factor $t = \frac{T}{N_d}$ and rate budget threshold $B_{\text{th}} = 0.01$.
Also, for the iterative balancing time slot allocation, the hyper-parameters were set as follows: the upper bound of the difference between $B_{\max}$ and $B_{\min}$ was $\epsilon_1 = 10^{-6}$bits/Hz and the minimum allocatable time slot was $\epsilon_2 = 10^{-7}$sec.

For a performance comparison, we adopted 3 other conventional schemes and optimal solution cases along with our proposed scheme:

\begin{figure}
	\centering
 \mbox{
 \subfigure[Proposed, 0.3 W]{\label{Ex5_0.3_Pro}
	\includegraphics[width=0.3\linewidth]{./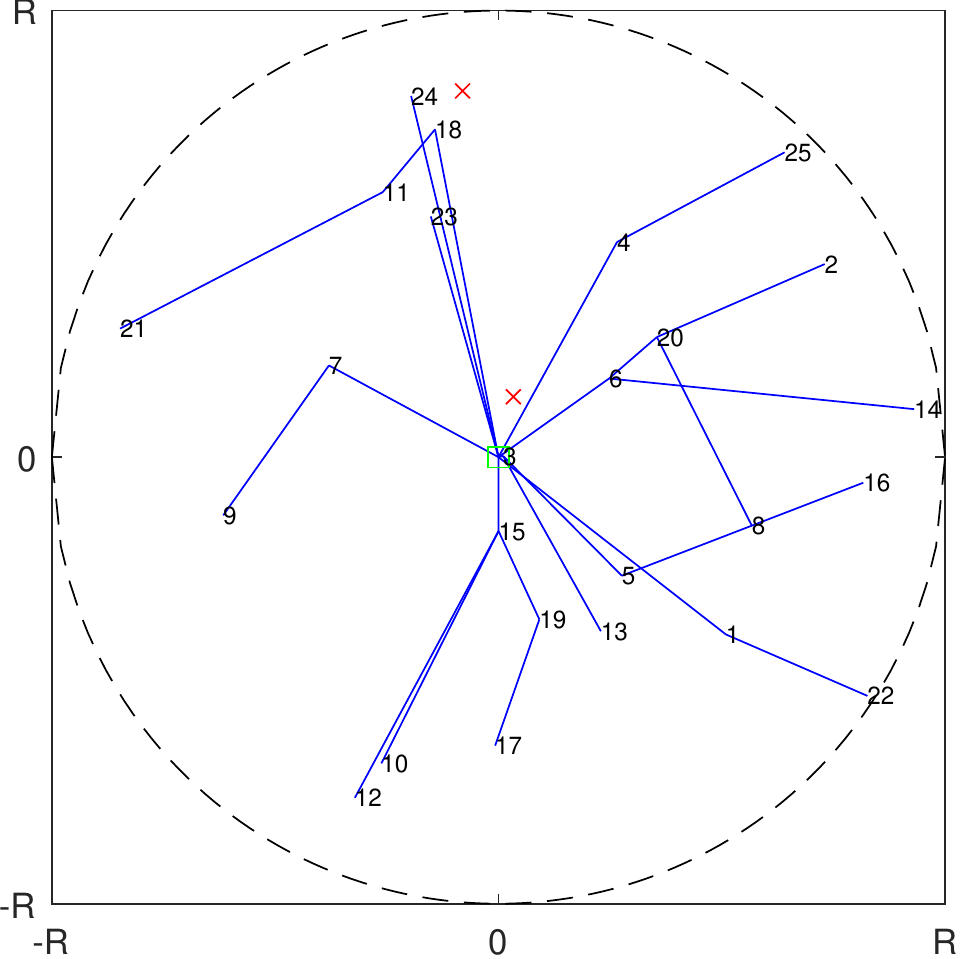} 
 }
  \subfigure[Proposed, 1 W]{\label{Ex5_1_Pro}
 \includegraphics[width=0.3\linewidth]{./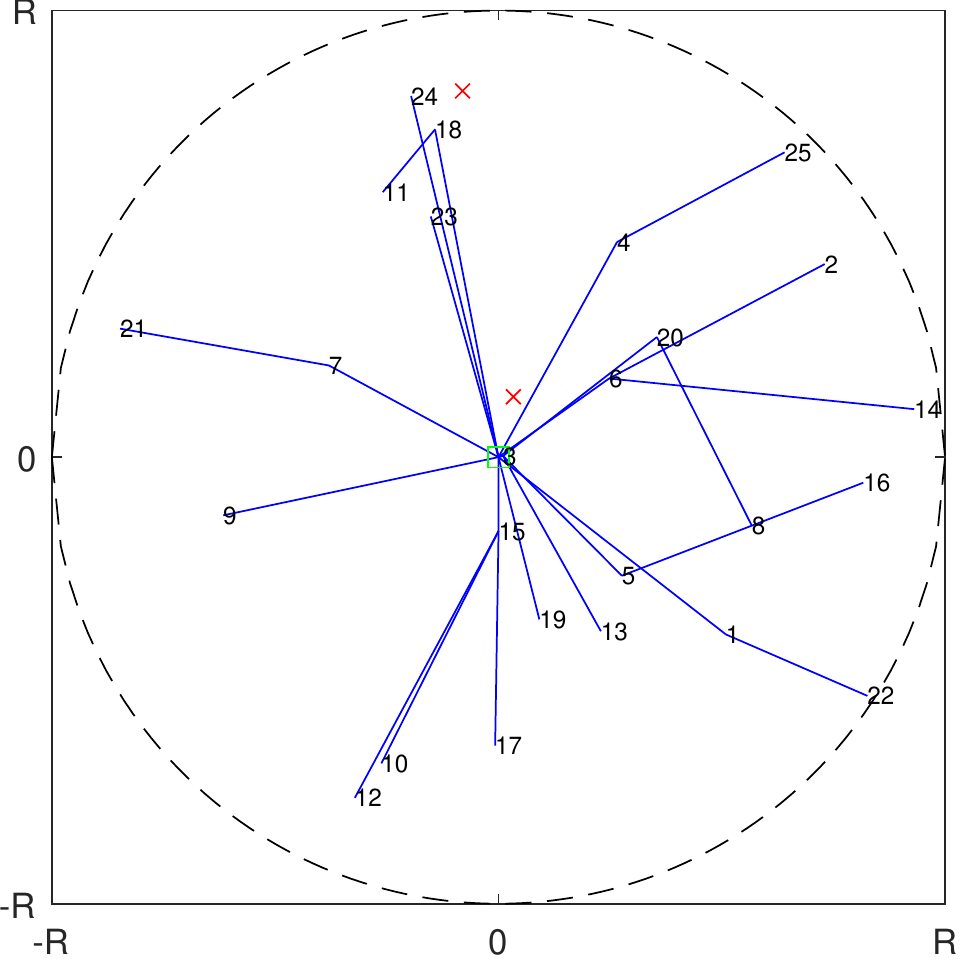} 
 }
  \subfigure[Proposed, 3 W]{\label{Ex5_3_Pro}
 \includegraphics[width=0.3\linewidth]{./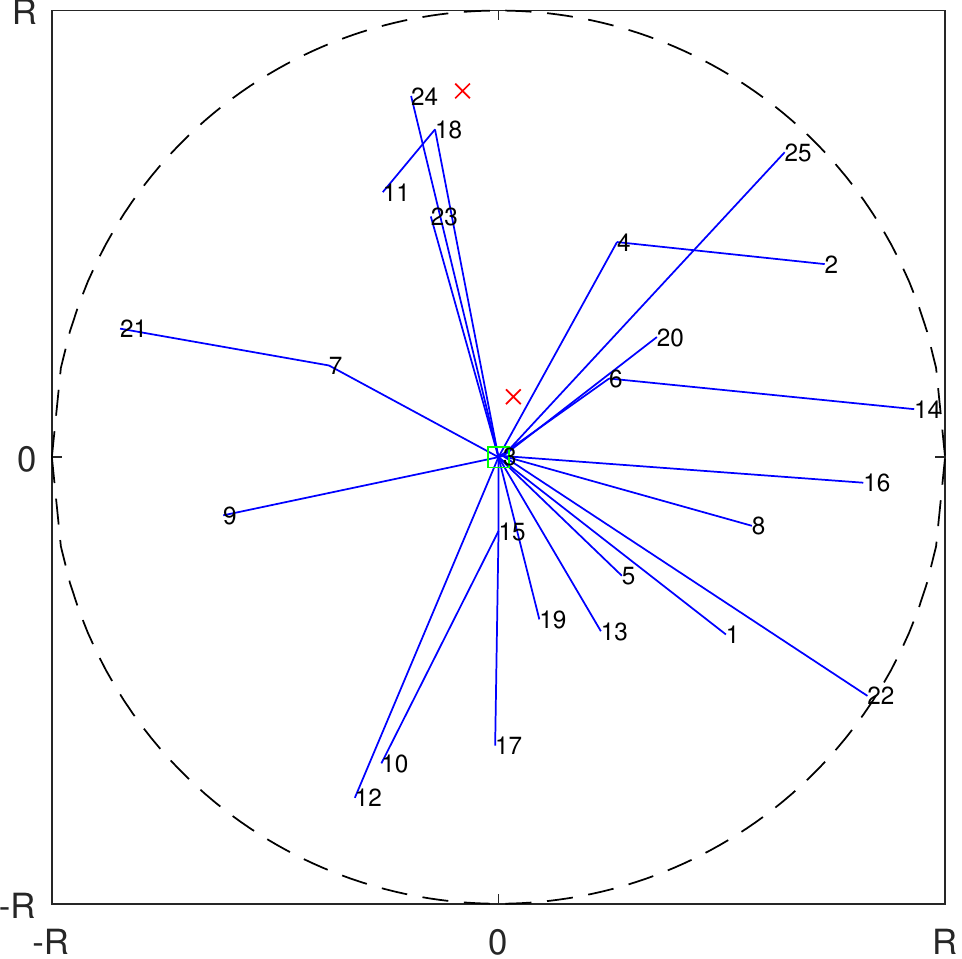} 
 }
    }
     \mbox{
 \subfigure[MST, 0.3 W]{\label{Ex5_0.3_MST}
	\includegraphics[width=0.3\linewidth]{./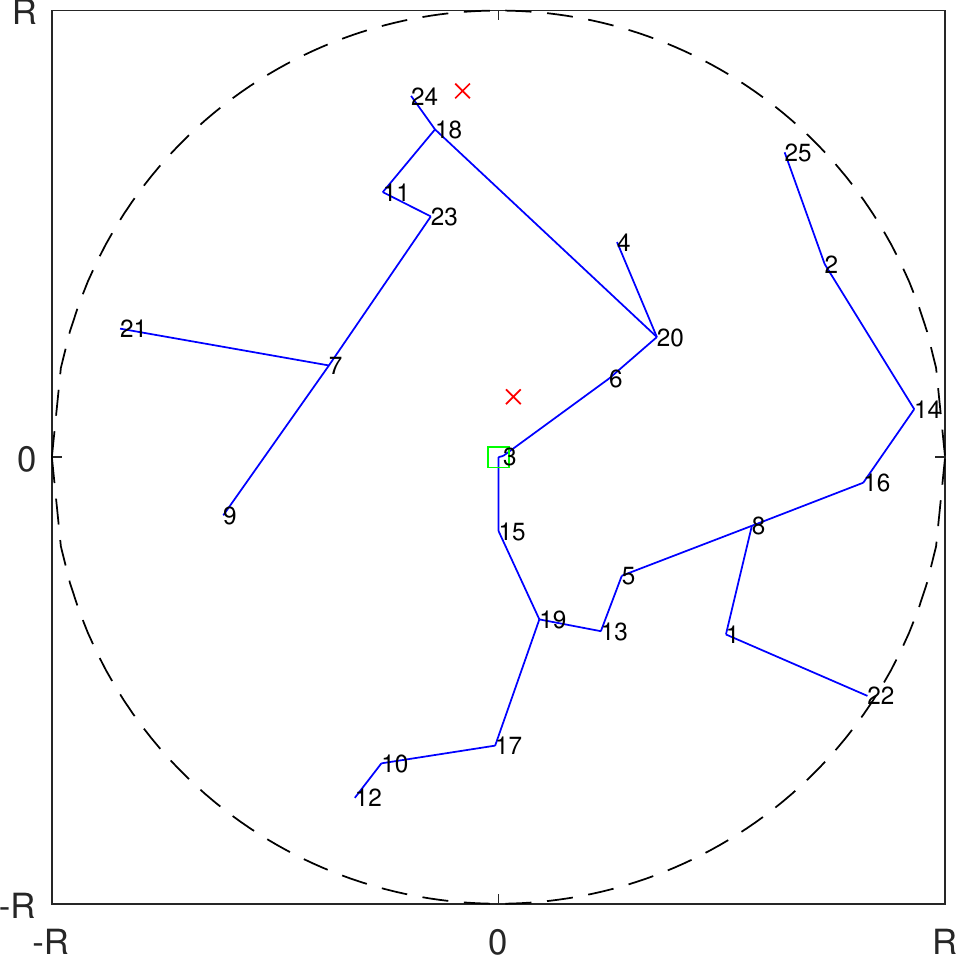} 
 }
  \subfigure[MST, 1 W]{\label{Ex5_1_MST}
 \includegraphics[width=0.3\linewidth]{./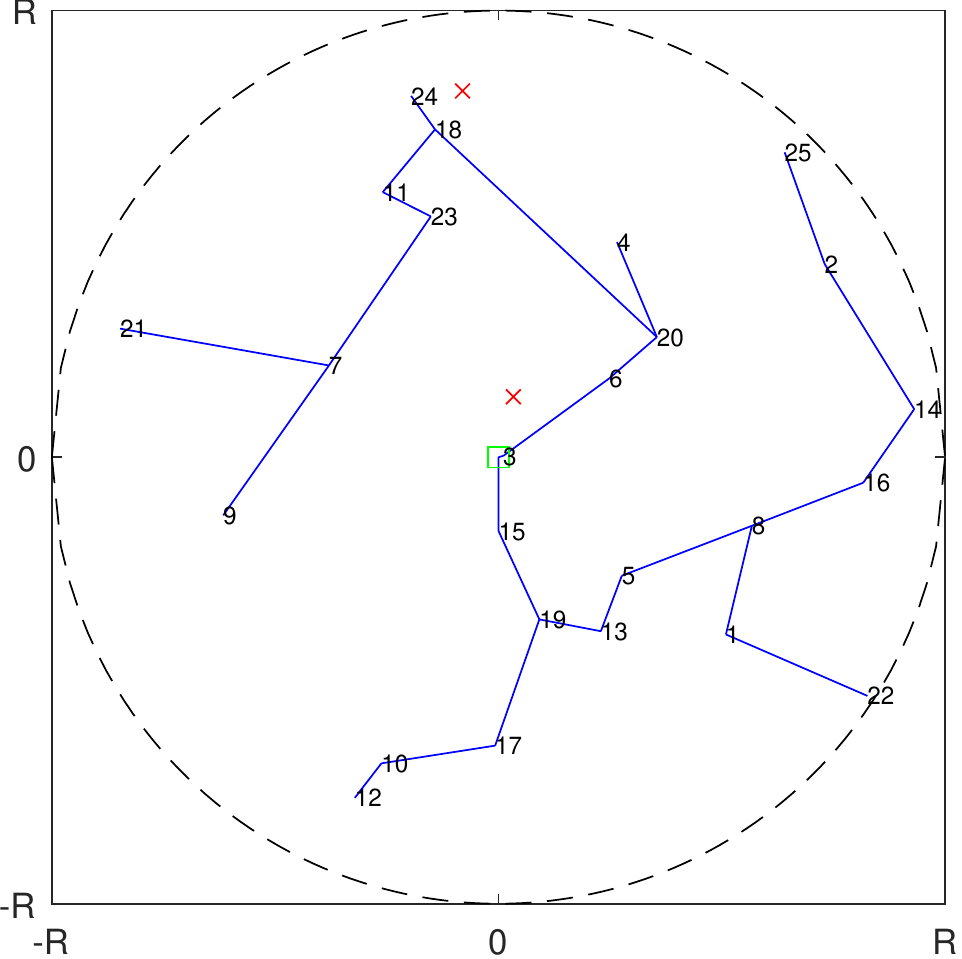} 
 }
  \subfigure[MST, 3 W]{\label{Ex5_3_MST}
 \includegraphics[width=0.3\linewidth]{./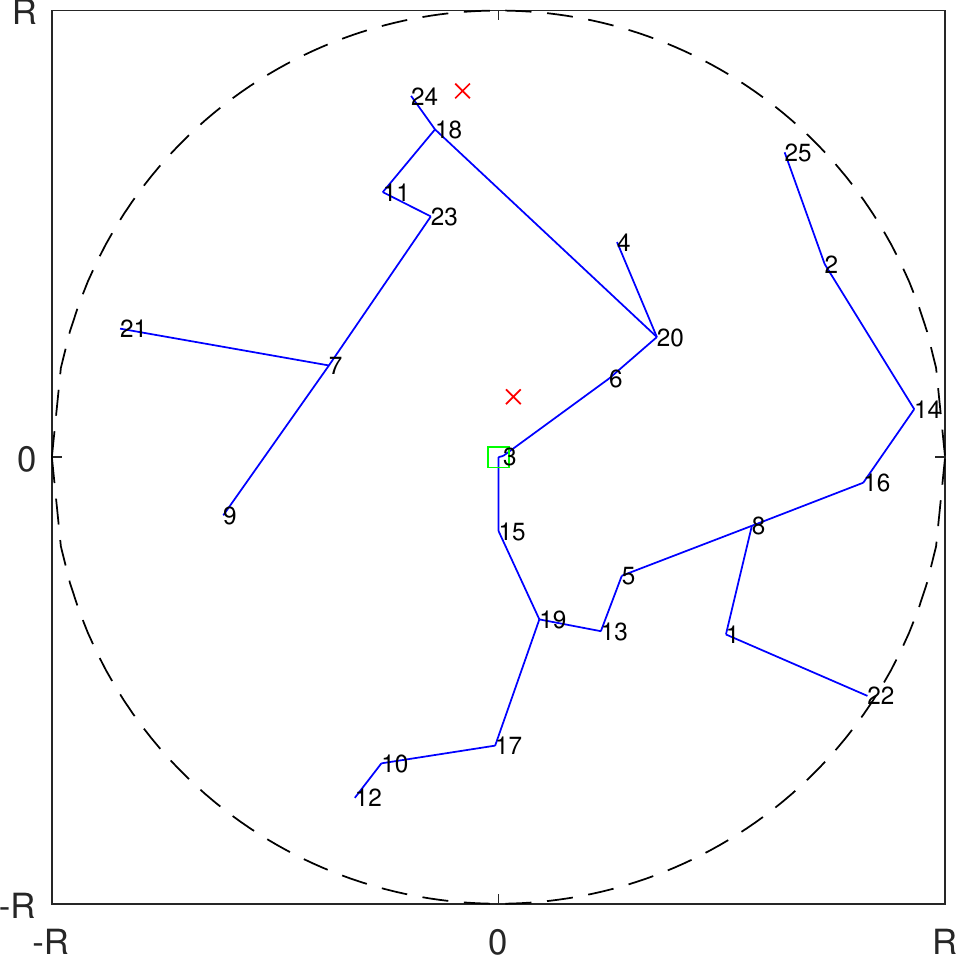} 
 }
    }
     \mbox{
 \subfigure[Greedy, 0.3 W]{\label{Ex5_0.3_Greedy}
	\includegraphics[width=0.3\linewidth]{./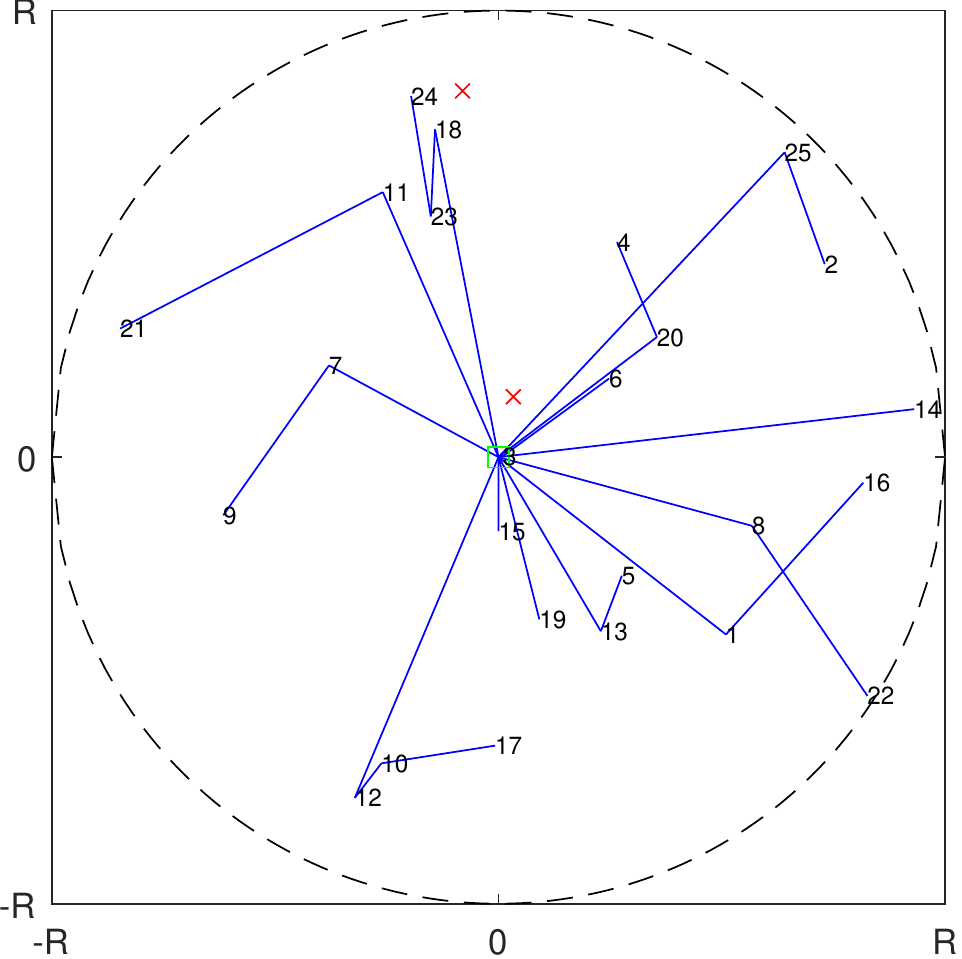} 
 }
  \subfigure[Greedy, 1 W]{\label{Ex5_1_Greedy}
 \includegraphics[width=0.3\linewidth]{./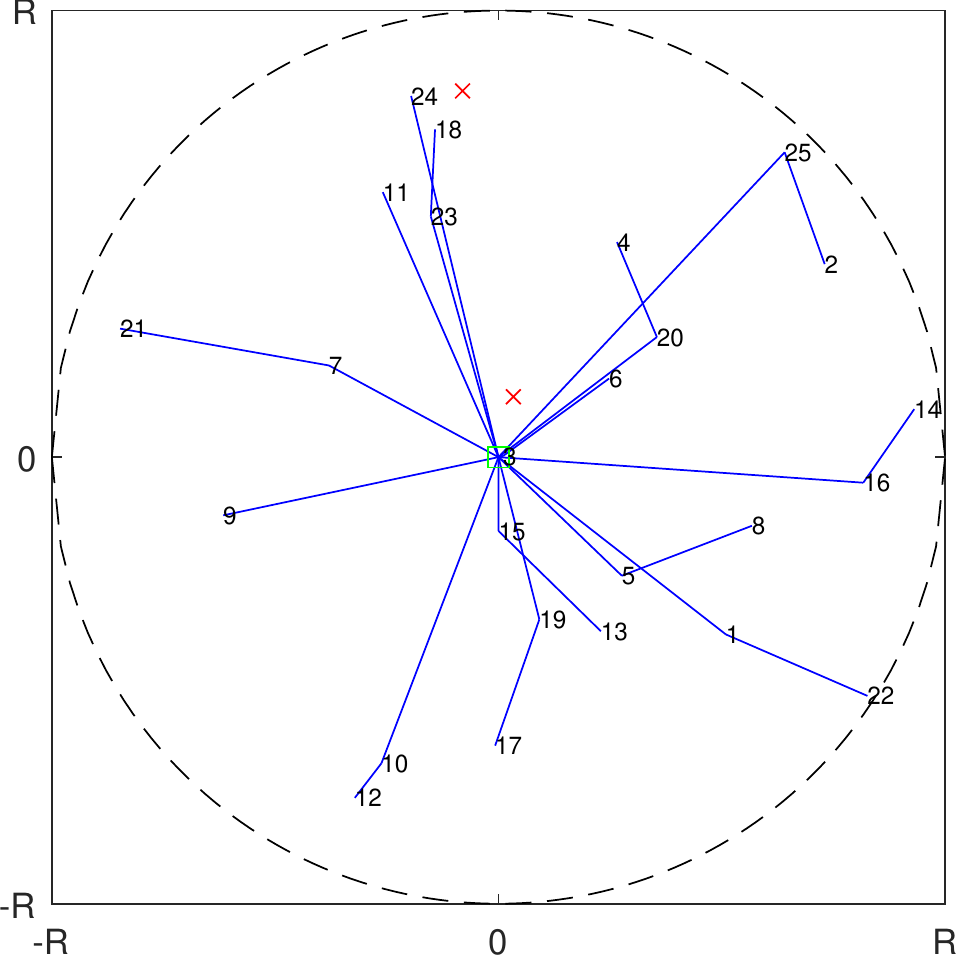} 
 }
  \subfigure[Greedy, 3 W]{\label{Ex5_3_Greedy}
 \includegraphics[width=0.3\linewidth]{./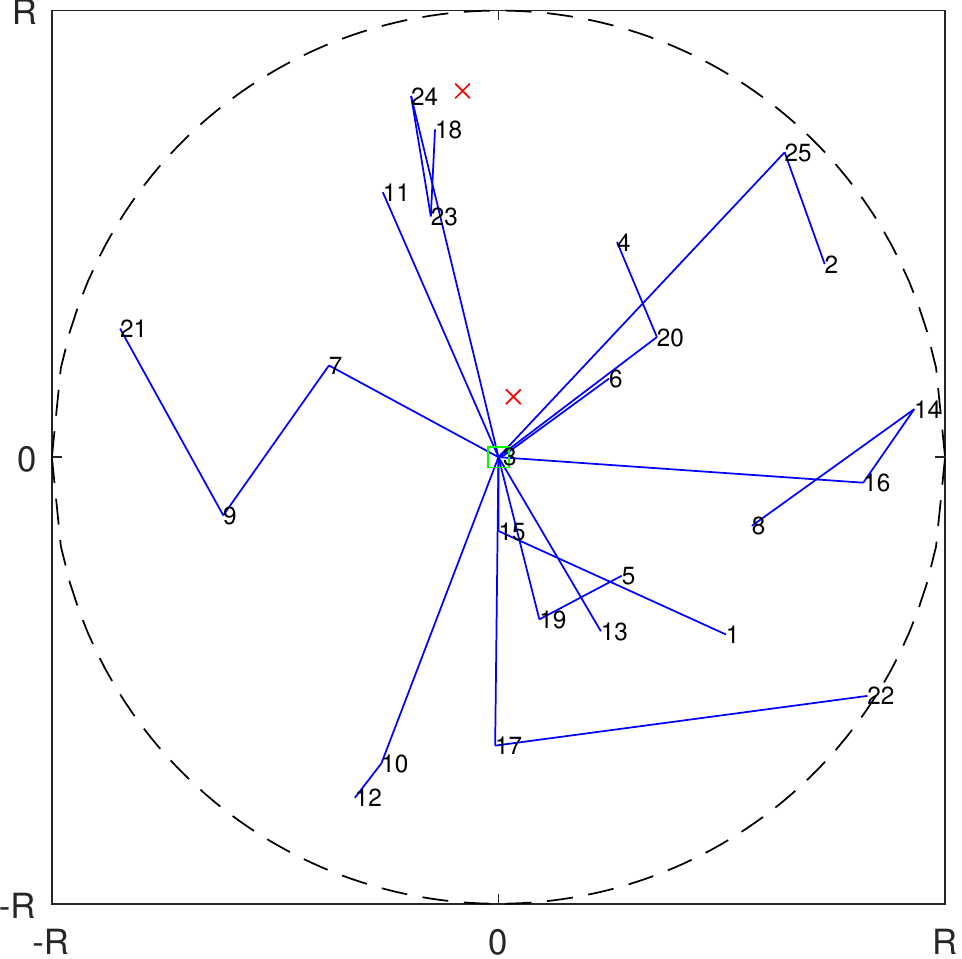} 
 }
    }
	\caption{Topology results over considered schemes with respect to the transmit power of the PB.}
	\label{Ex5}
\end{figure}

\begin{itemize}
    \item \textbf{Optimal solution (Opt)} which exhaustively searches all cases of valid topology configuration while allocating the time slot through Algorithm \ref{alg_TSalloc}, then chooses the best rate topology. 
    \item \textbf{Direct connect scheme (Dir)} which connects all nodes directly to the sink ($c_{n,N_d+1}=1$, $\forall n \in \mathbb{N}_d$), and allocates the time slot through Algorithm \ref{alg_TSalloc}.
    \item \textbf{MST scheme (MST)} which makes a Minimum Spanning Tree (MST) starting from the sink under initial $\mathbf{t}$ where $t_i$ = $T/N_d$ for $i \in \mathbb{N}_d$, based on link costs which are set to be reciprocal of the capacity, denoted as $1/\left( t_i \log _{2} (1+\Gamma_{i,j})\right)$, and allocates the time slot through Algorithm \ref{alg_TSalloc}.
    \item \textbf{Greedy scheme (Greedy)} which connects all nodes directly to the sink ($c_{n,N_d+1}=1$ $\forall n \in \mathbb{N}_d$), and repeats randomly selecting a node $i$ and connecting it to the one with highest achievable rate, $\min(t_i \log _{2} (1+\Gamma_{i,j}), \mathbf{B}_{j})$ where $\forall j \in \mathbb{N}_d$, then allocates the time slot through Algorithm \ref{alg_TSalloc}. 
    \item \textbf{Proposed scheme (Prop)} which decides the topology with our proposed VAE and PT-EVM scheme and allocates the time slot with Algorithm \ref{alg_TSalloc}.
    \begin{comment}
    \end{comment}
\end{itemize}

All considered schemes were implemented on Matlab R2022b, using a computer equipped with an Intel Core i7-11700 and 16 GB of memory, without GPU acceleration.

\begin{table}
\centering
\begin{tabular}{|l|c|c|c|}
\hline
 & 0.3W & 1W & 3W \\
\hline
Proposed & 0.12324 & 0.19514 & 0.24229 \\
\hline
MST & 0.077389 & 0.093258 & 0.10730 \\
\hline
Greedy & 0.029918 & 0.15273 & 0.18587 \\
\hline
\end{tabular}
\caption{Minimum bits per frequency(bits/Hz) over the considered schemes with respect to the transmit power of the PB.}
\label{Ex5_Table}
 \vspace{-1.0 cm}
\end{table}

\begin{figure}[t]
	\centering
	\includegraphics[width=0.9\linewidth]{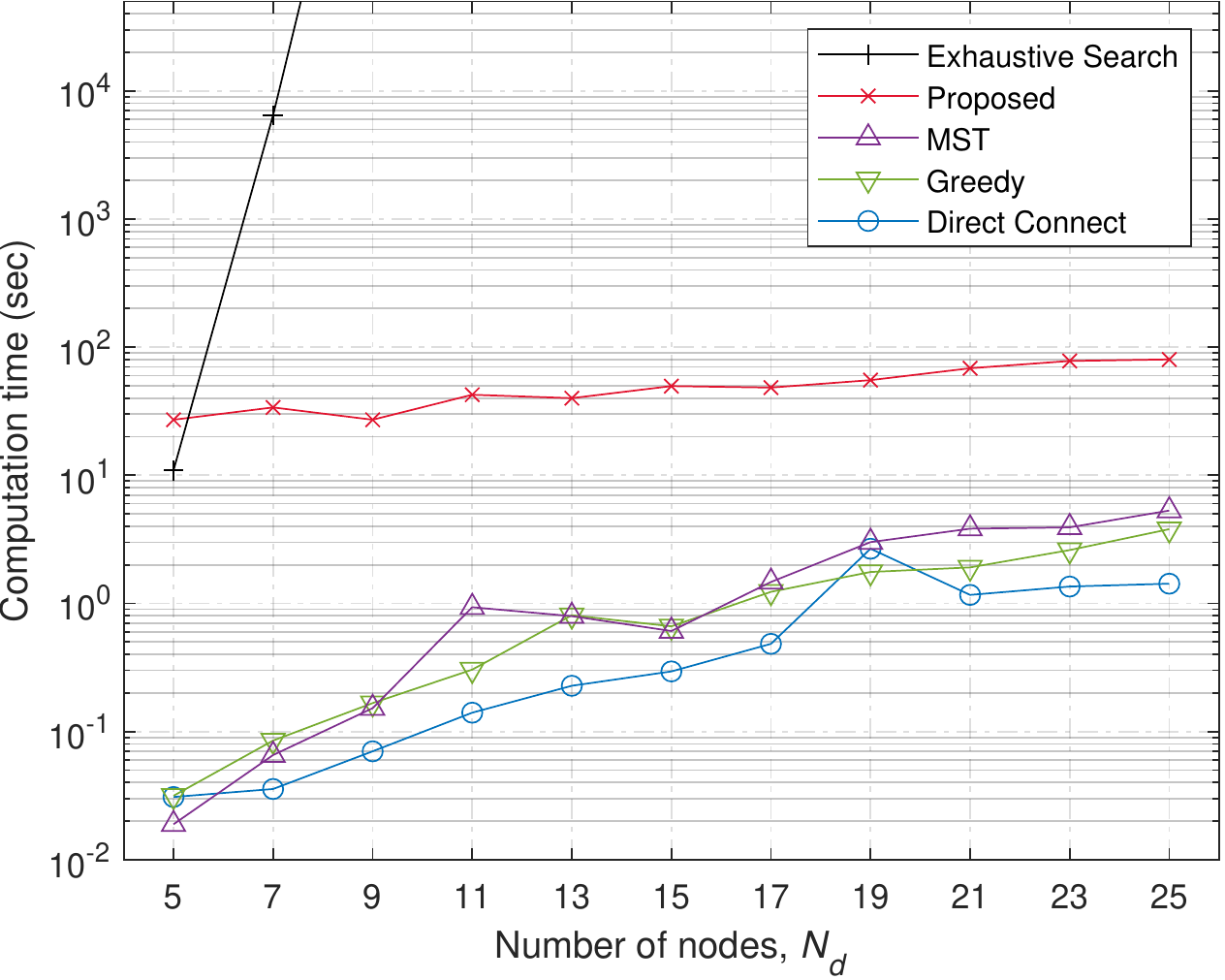} 
	\caption{Computation time over the considered schemes with respect to the number of nodes, $N_d$}
	\label{CT1}
  \vspace{-0.5cm}
\end{figure}

\subsection{Experiment Analysis and Discussion}

Fig. \ref{Work1} shows that the resulting raw topology (not post-processed) gradually changes in the example given in Fig. \ref{NetExample} as learning progresses.
Each sub-figure shows the raw topology result in epoch $0$, $10$, and $50$. We can see that our proposed scheme effectively converges to a sub-optimal topology in a reasonable training epoch. 
As shown in Fig. \ref{Work1}, the proposed VAE scheme converges to the final result topology within $50$ iterations (roughly taking $60$ seconds to compute). 
the enlarged final result topology for this network example is depicted in Appendix A.

Fig. \ref{TS1} shows the time slot allocation results of the example given in Fig. \ref{NetExample}. 
Note that relaying nodes such as node ID 1, 4, 5 and 15 are allocated relatively more time slots, as well as the nodes in an inferior power/position environment such as node ID 12 and 14.
Because of this, the proposed IB time slot allocation algorithm effectively allocates the time slot fairly.

Fig. \ref{L1} shows the training loss and performance (bits/Hz, $B_{\min}$) convergence with respect to the number of training epochs for the proposed VAE and PT-EVM hybrid scheme under the $N_d = 25, N_b = 2$ sample, previously described in Fig. \ref{NetExample}.

\begin{figure*}
	\centering
	\mbox{\subfigure[$N_d=5$]{\label{Ex1-1} \includegraphics[width=0.33\linewidth]{./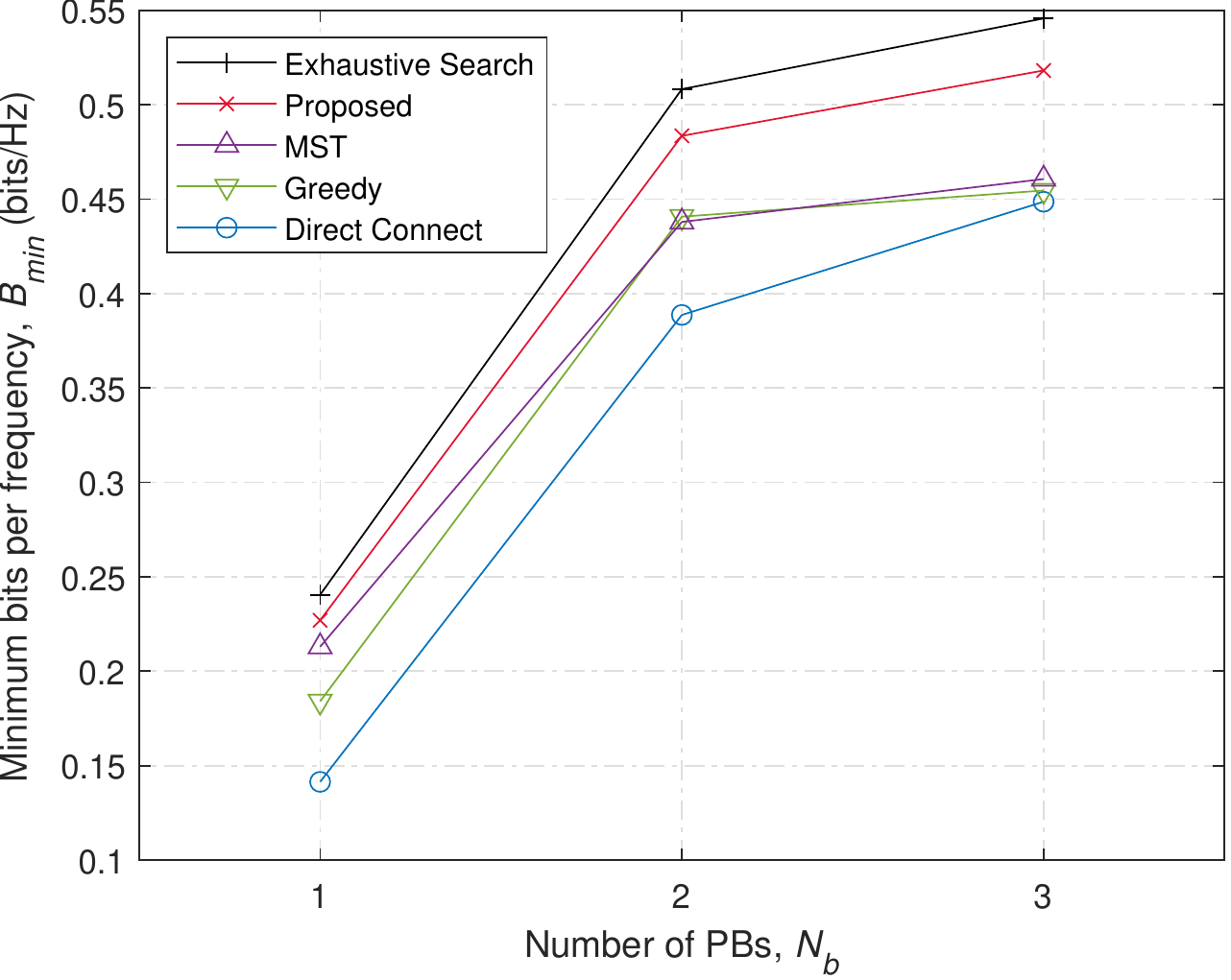}
	}
	\subfigure[$N_d=6$]{\label{Ex1-2} \includegraphics[width=0.33\linewidth]{./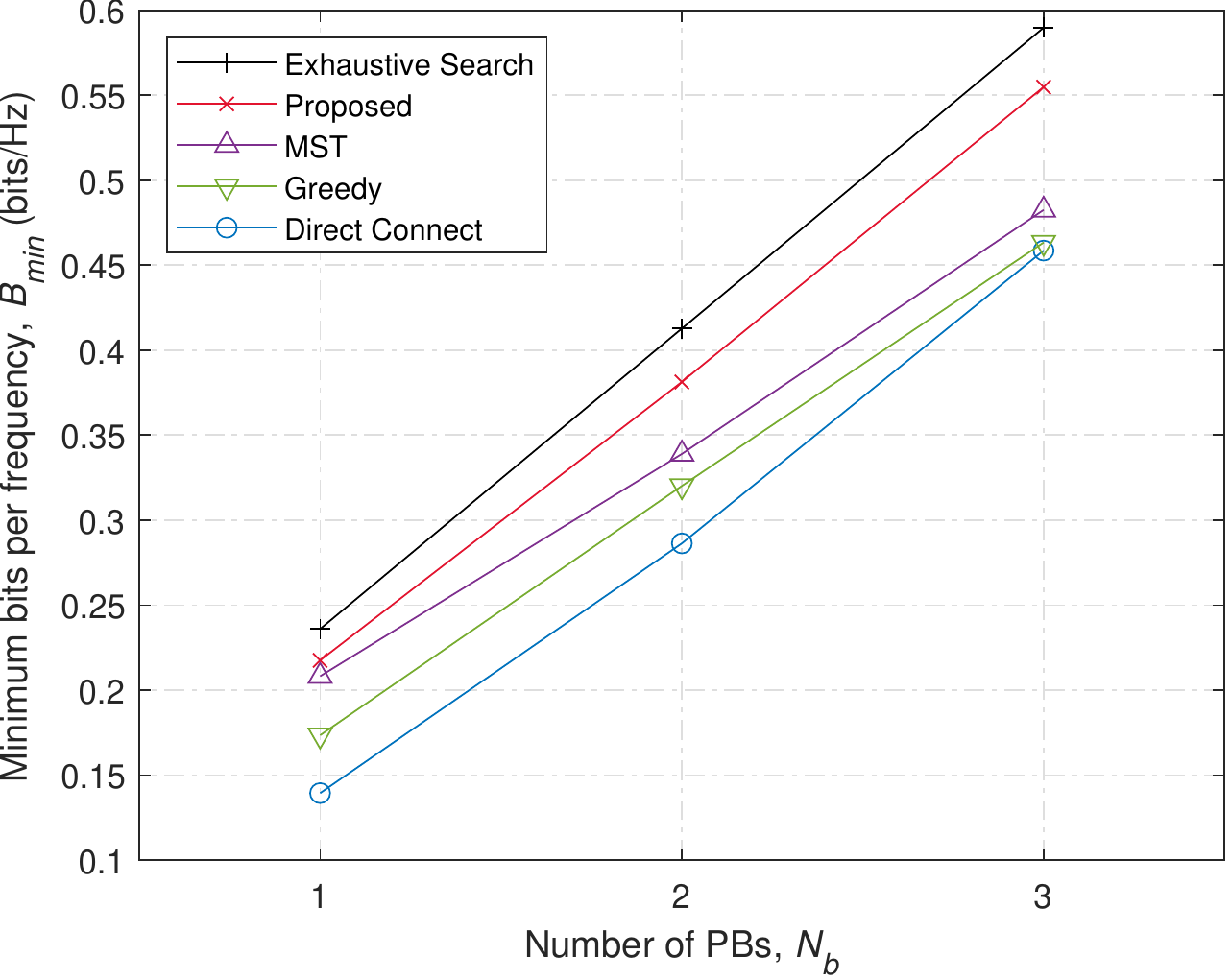}
    }
    \subfigure[$N_d=7$]{\label{Ex1-3} \includegraphics[width=0.33\linewidth]{./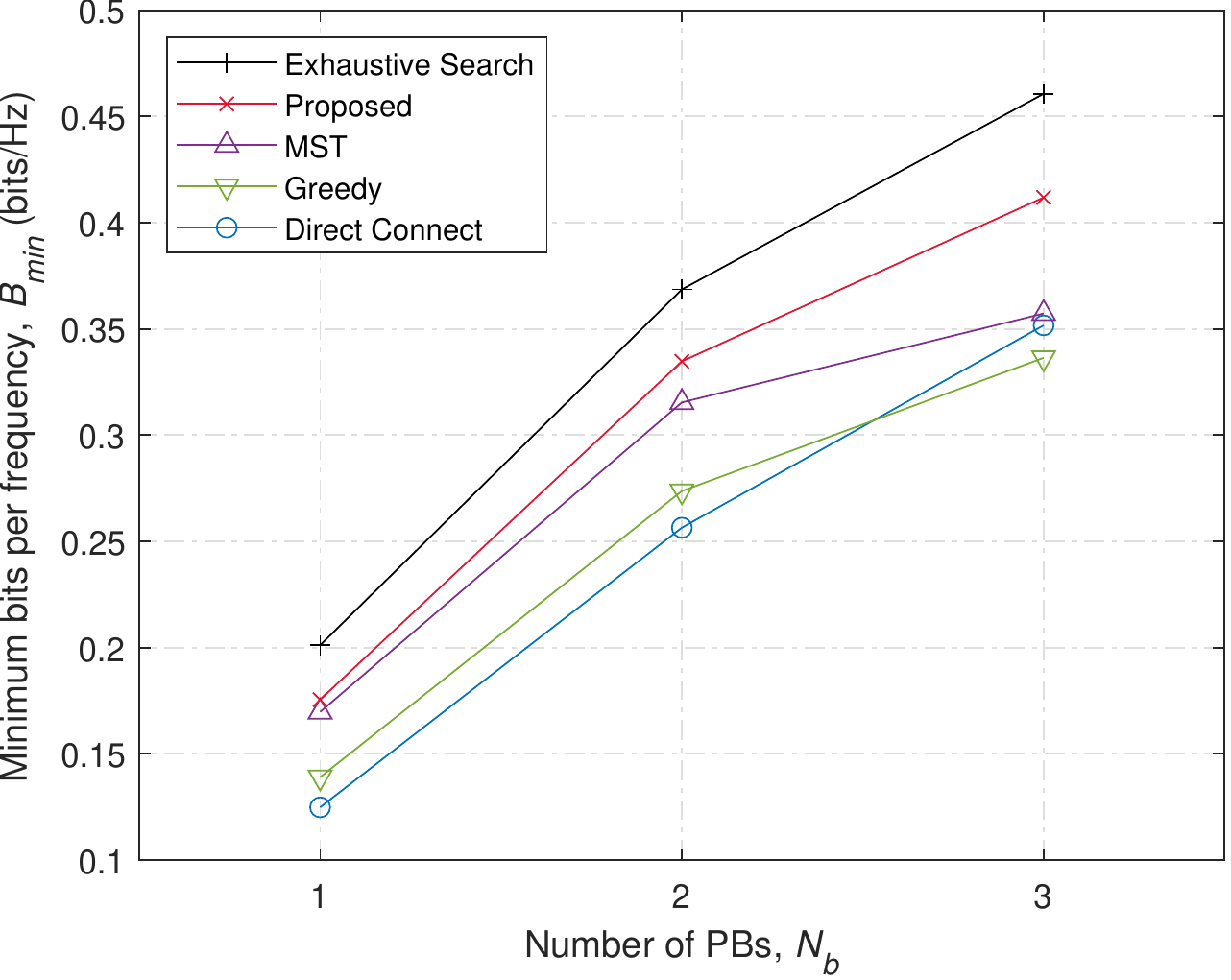}
	}
    }
	\caption{\mbox{Performance comparison with respect to the number of PBs, $N_b$ in $N_d\in\{5,6,7\}$}}
	\label{Ex1}
  \vspace{-0.3cm}
\end{figure*}

\begin{figure*}
	\centering
	\mbox{\subfigure[$N_d=10$]{\label{Ex2-1} \includegraphics[width=0.33\linewidth]{./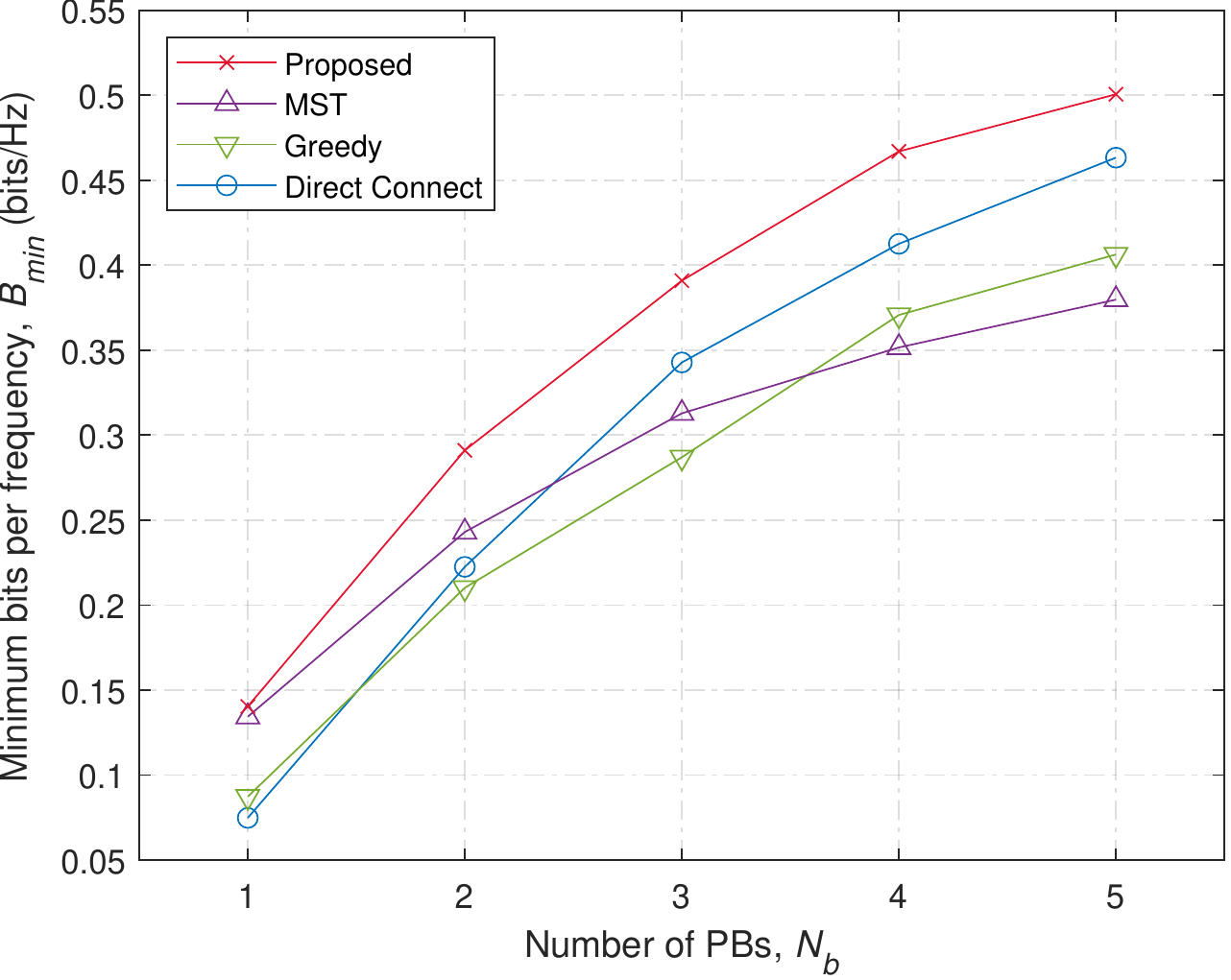}
	}
	\subfigure[$N_d=20$]{\label{Ex2-2} \includegraphics[width=0.33\linewidth]{./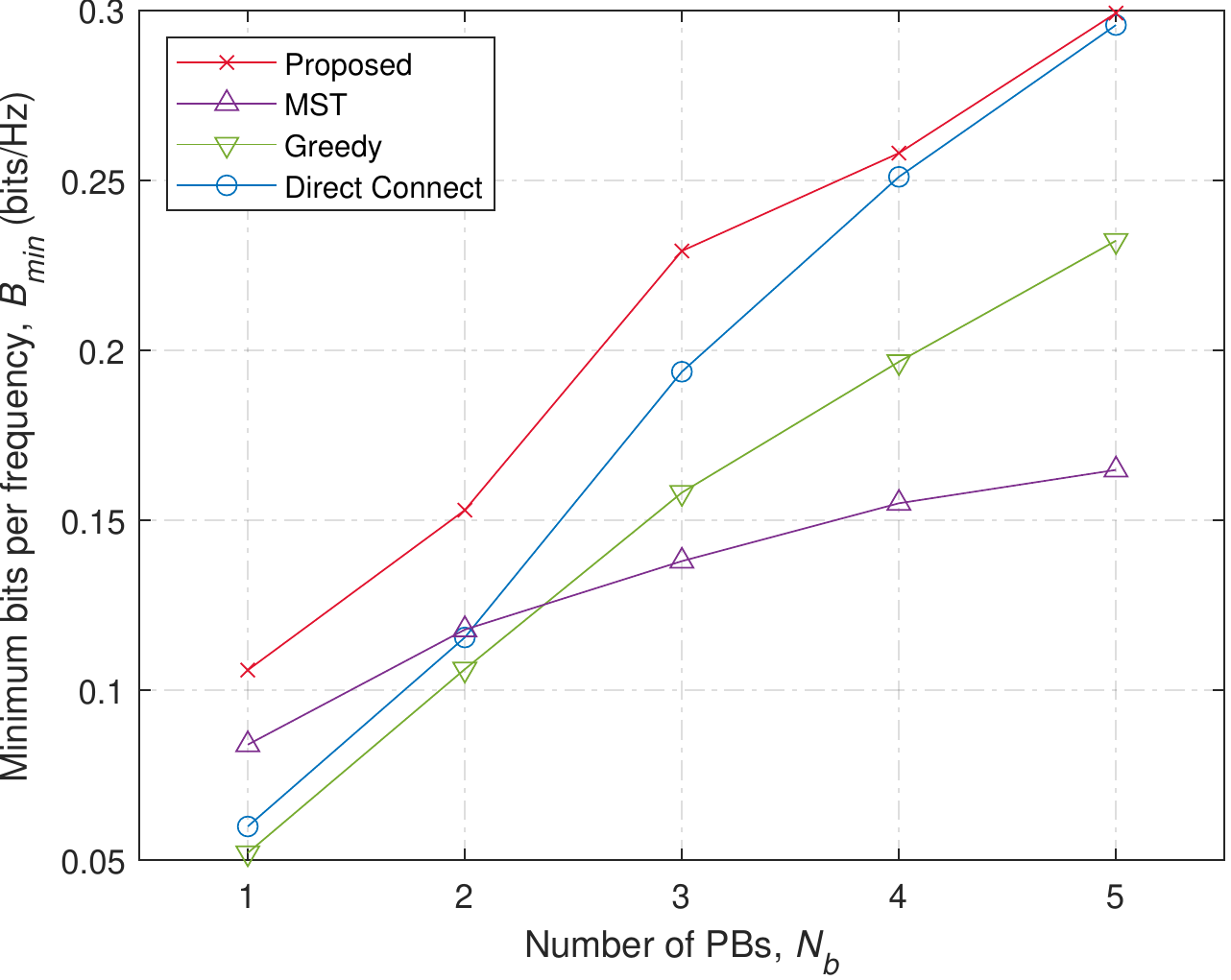}
    }
    \subfigure[$N_d=30$]{\label{Ex2-3} \includegraphics[width=0.33\linewidth]{./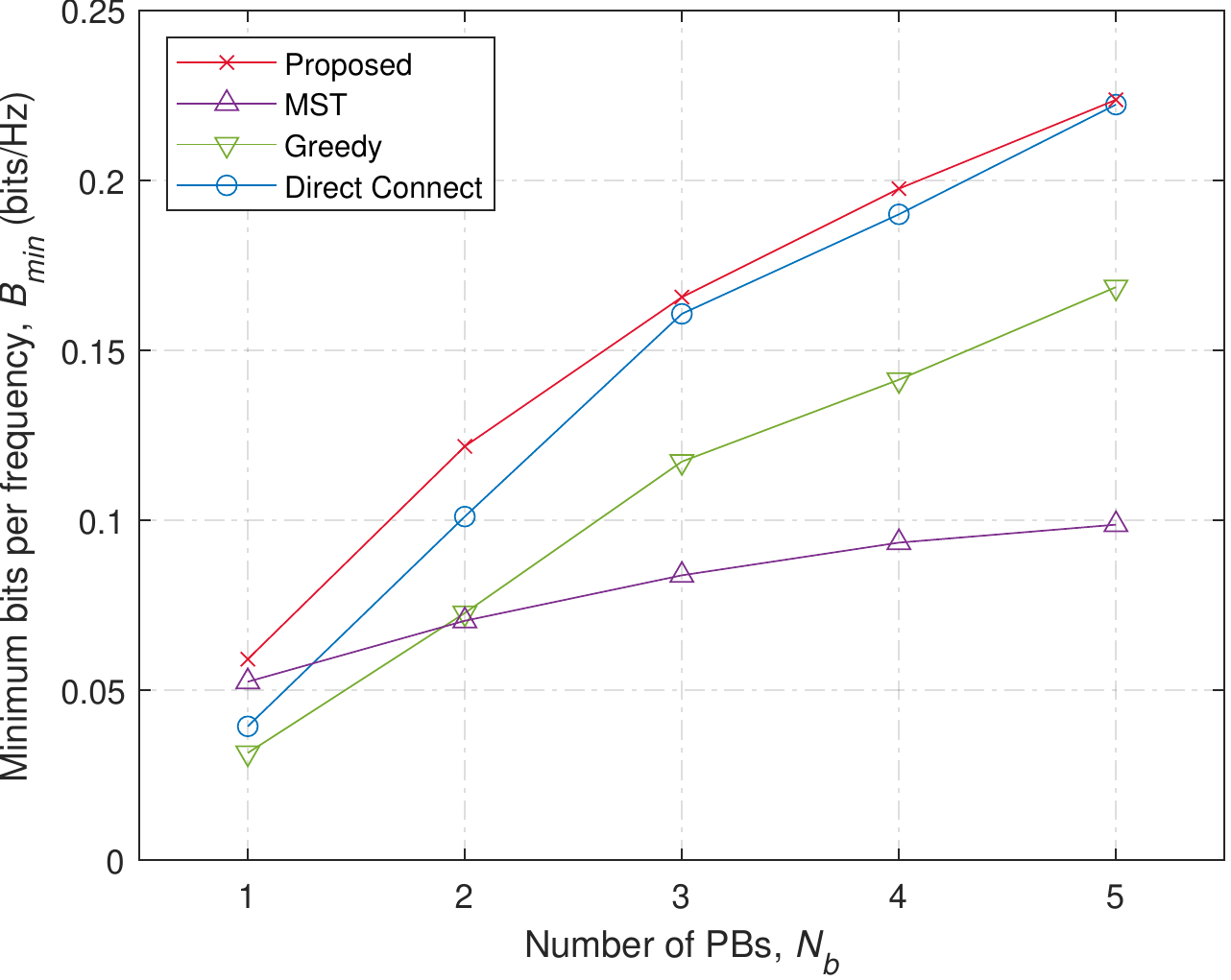}
	}
    }
	\caption{\mbox{Performance comparison with respect to the number of PBs, $N_b$ in $N_d\in\{10,20,30\}$}}
	\label{Ex2}
  \vspace{-0.3cm}
\end{figure*}

The blue dashed line in Fig. \ref{L1} indicates the point at which training has ended. 
The proposed VAE scheme ends its training loop at epoch 91. 
However, we showed the result of the plot over the end of the training loop so that we can observe what happens when training loop continues after the expected learning termination point.
It can be confirmed that our proposed scheme is both stable and non-overfitting since training loss and performance remain constant after the learning progresses beyond the termination point of the learning epoch.

    The training loss function value is defined as $L = \frac{1}{N_d} \sum_{i \in \mathbb{N}_d}{\exp(-R_{\text{sim}})_i}$ and the value $(R_{\text{sim}})_i$ has a very small metric due to the previously described characteristics of IoT networks.
    For example, Fig. \ref{L1} shows that $(R_{\text{sim}})_i$ have an initial value of zero and a final value of about 0.2 for the network configuration in Fig. \ref{NetExample}, meaning that the training loss value $L$ starts from $1$ and finalizes at approximately 0.7 to 0.8, respectively. 
    Therefore, the training loss value is bound to show minimal change.

Fig. \ref{Ex5} and Table \ref{Ex5_Table} show the topology results and performances (bits/Hz) over the considered schemes with respect to the transmit power of the PB.
It can be seen that the topology results by the proposed scheme adaptively change depending on the transmit power of the PB, showing a preference for direct connections in a high power environment and a preference for relaying connections in a low power environment.
The MST algorithm results in identical topology despite the changing power configuration, showing a lack of adaptability for various IoT network configurations.
As shown in Table \ref{Ex5_Table}, we can also confirm that our proposed scheme outperforms other schemes over different configurations of the transmit power of the PB.

Fig. \ref{CT1} shows the measured computation time among the considered schemes with respect to the number of nodes from $N_d = 5$ to $25$ with an interval of $2$. 
Note that the number of PBs does not affect the measured computation time.
We can see that the computation time of the optimal solution search grows exponentially, approximately 2 hours for the $N_d=7$ case, which makes the exhaustive search method infeasible in practical environments, e.g., $N_d \geq 8$. 
Consequently, the computation time of the optimal solution is not suggested over $N_d \geq 8$. 
This suggests the need of an alternative scheme with sub-optimal performance and rational computation time. 

Although our proposed VAE and PT-EVM scheme takes more computation time over other considered schemes, the computation time gap between the proposed scheme and conventional schemes decreases as the order of magnitude of $N_d$ increases.
This makes the proposed scheme plausible for use over large IoT networks.

Fig. \ref{Ex1-1} to \ref{Ex1-3} shows the minimum bits per frequency with respect to $N_b \in \{ 1,\,2,\,3\}$ and $N_d\in \{5,\,6,\,7\}$, respectively, over all the considered schemes. 
Each data point on Fig. \ref{Ex1-1} to \ref{Ex1-3} is averaged over the performance results from $30$ random distributions of nodes and PBs.
The minimum bits/Hz could be converted to a rate by multiplying $\frac{BW}{T}$, and the difference between the minimum and maximum bits/Hz is guaranteed to be below $10^{-6}$ since Alg. \ref{alg_TSalloc} is implemented with $\epsilon_1 = 10^{-6}$ bits/Hz.
Fig. \ref{Ex1-1} to \ref{Ex1-3} shows that the minimum bits/Hz increases over all considered schemes as $N_b$ increases since the received power at each node increases. 
In particular, the MST scheme shows relatively good performance in the low power condition, while the direct connect scheme shows good performance in the high power condition; therefore, those schemes have strengths in different power environments.
Conversely, our proposed scheme achieves better performance than all other considered schemes at all configurations of number of nodes and PBs: closest to the optimal solution we found.

Fig. \ref{Ex2-1} to \ref{Ex2-3} shows the minimum bits/Hz with respect to $N_b\in \{1,\,2,\,3,\,4,\,5\}$ and $N_d\in\{10,\,20,\,30\}$. Note that each data point on Fig. \ref{Ex2-1} to \ref{Ex2-3} is also averaged over the performance results from $30$ random distributions of nodes and PBs.
the greedy method shows lower performance than the direct connect method compared to fewer $N_d$s, as shown in Fig. \ref{Ex1-1} to \ref{Ex1-3}, since the optimal choice from a local perspective may not be appropriate from a global perspective, thus deteriorating the overall performance of the network.
Our proposed scheme is also confirmed to be superior over the three other considered schemes, ranked at the highest performance in the graph at all simulation configurations.
In these cases, we do not provide optimal solutions due to the excessive computation time of an exhaustive search.
Also, as shown in Fig. \ref{Ex1-1} to \ref{Ex2-3}, our proposed scheme is superior among various $N_d$ from 5 to 30, illustrating the scalability of our method.

Therefore, we can confirm that our proposed scheme outperforms other existing schemes while achieving sub-optimal performance.
Subsequently, we can infer three reasons for the performance superiority and adaptability towards varying environment configurations of our proposed scheme. 
i) The VAE stage has a non-linearity property within the model, thereby effectively coping with our NP-hard mixed integer system model problem.
 ii) The backward-pass based evaluation stage and its novel packet-tracing algorithm gives a sufficiently concise and accurate assessment for output topology from the VAE part result simultaneously, thus providing the VAE stage with an appropriate learning direction. 
 iii) Lastly, our proposed IB based time allocation algorithm fairly distributes the time slot in the TDMA system, resulting in effective final fine-tuning on the formulated max-min problem.

%%%%%%%%%%%%%%%%%%%%%%%%%%%%%%%%%%%%%%%%%%%%%%%%%%%%%%%%%%%%%%%%%%%%%%%%%%%%%
\section{Conclusions}

In this study, we formulated a max-min optimization problem for the TDMA based IoT relay network with energy harvesting (EH).
We proposed a VAE based module to effectively solve the formulated problem, in spite of a lack of dataset and the non-linearity characteristic of problem. 
We also proposed a novel backward-pass based assessment algorithm called "Packet-Tracing" to precisely and concisely assess and train our proposed VAE module. 
Finally, we proposed an IB time slot allocation algorithm to achieve TDMA fine-tuning optimization in the fairness aspect.
We presented a practical example of our proposed scheme running along with the loss and performance plot, showing its stability and performance.
We also provided a computation and performance comparison with three other conventional schemes and the optimal brute-force solution.
We observed and confirmed that our proposed scheme is both stable and superior to other considered schemes through numerical simulations.

\begin{appendices}
\section{Output of the proposed scheme in the example given in Figure 1}
\begin{figure}[H]
	\centering
	\includegraphics[width=0.7\linewidth]{./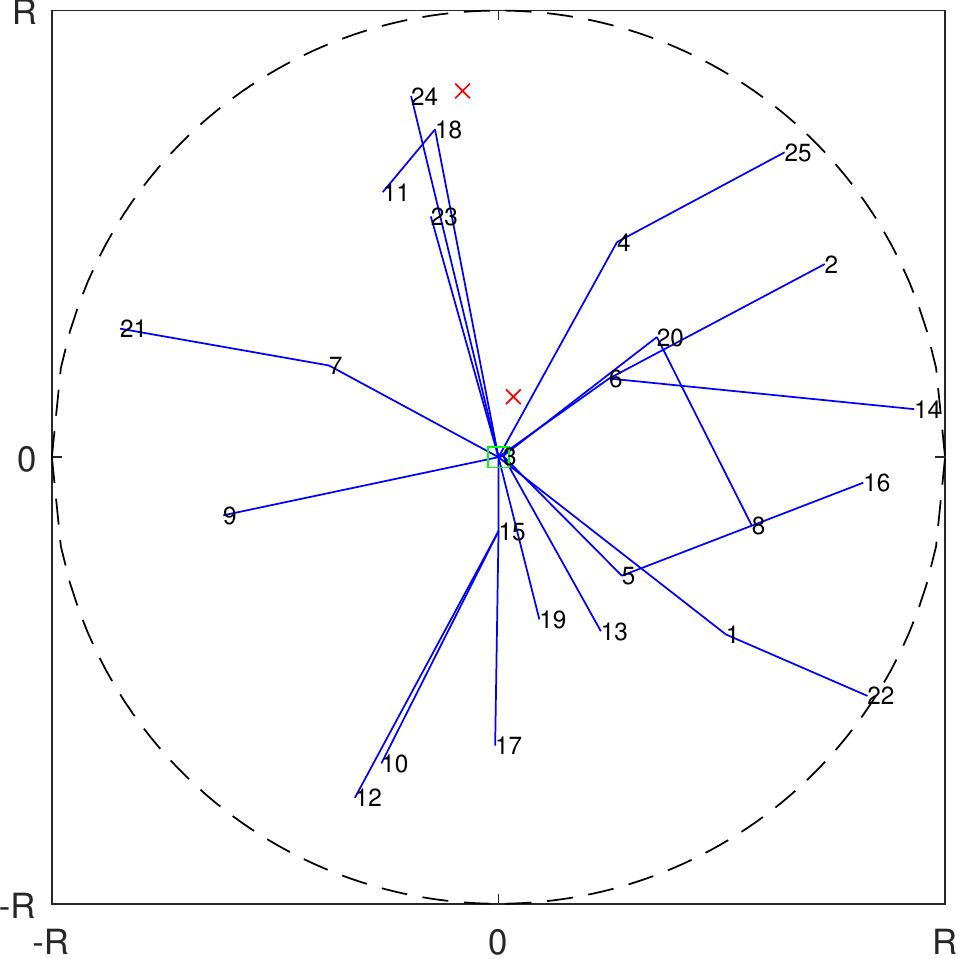} 
	\caption{Output of proposed scheme in the example given in Fig. \ref{NetExample}} 
	\label{NetExample_result}
\end{figure}

\end{appendices}

%%%%%%%%%%%%%%%%%%%%%%%%%%%%%%%%%%%%%%%%%%%%%%%%%%%%%%%%%%%%%%%%%%%%%%%%%%%%%

\bibliographystyle{IEEEtran}

\begin{thebibliography}{1}

    \bibitem{Shafique20} K. Shafique, B. A. Khawaja, F. Sabir, S. Qazi, and M. Mustaqim, ``Internet of Things (IoT) for Next-Generation Smart Systems: A Review of Current Challenges, Future Trends and Prospects for Emerging 5G-IoT Scenarios,'' \emph{IEEE Access}, vol. 8, pp. 23022-23040, Feb. 2020.

    \bibitem{Li12} B. Li, W. Wang, Q. Yin, R. Yang, Y. Li, and C. Wang, ``A New Cooperative Transmission Metric in Wireless Sensor Networks to Minimize Energy Consumption per Unit Transmit Distance,'' \emph{IEEE Commun. Lett.}, vol. 16, no. 5, pp. 626-628, May 2012.

    \bibitem{Ma19} D. Ma, G. Lan, M. Hassan, W. Hu, and S. K. Das, ``Sensing, computing, and communications for energy harvesting IoTs: a survey,'' \emph{IEEE Commun. Surveys \& Tutorials}, vol. 22, no. 2, pp. 1222-1250, 2nd Quart., 2020.

    \bibitem{Benkhelifa21} F. Benkhelifa, Z. Qin, and J. A. McCann, ``User Fairness in Energy Harvesting-Based LoRa Networks With Imperfect SF Orthogonality,'' \emph{IEEE Trans. Commun.}, vol. 69, no. 7, pp. 4319-4334, Jul. 2021.

    \bibitem{Leu14} J.-S. Leu, T.-H. Chiang, M.-C. Yu, and K.-W. Su, ``Energy Efficient Clustering Scheme for Prolonging the Lifetime of Wireless Sensor Network with Isolated Nodes,'' \emph{IEEE Commun. Lett.}, vol. 19, no. 2, pp. 259-262, Dec. 2014.
        
    \bibitem{He20} T. He, K.-W. Chin, S. Soh, C. Yang, and J. Wen, ``On Maximizing Min Source Rate in Power Beacon Assisted IoTs Networks,'' \emph{IEEE Trans. Veh. Tech.}, vol. 69, no. 10, pp. 11880-11892, Oct. 2020. 

    \bibitem{Lodi09} A. Lodi, ``Mixed integer programming computation,'' in \emph{50 Years of Integer Programming 1958-2008}, Berlin, Germany: Springer, 2010, pp. 619-645.

    \bibitem{Lee19} H. Lee, S. H. Lee, and T. Q. S. Quek, ``Deep learning for distributed optimization: applications to wireless resource management,'' \emph{IEEE J. Sel. Areas Commun.}, vol. 37, no. 10, pp. 2251-2266, Oct. 2019.

    \bibitem{Luo19} J. Luo, J. Tang, D. K. C. So, G. Chen, K. Cumanan, and J. A. Chambers, ``A deep-learning based approach to power minimization in multi-career NOMA with SWIPT,'' \emph{IEEE Access}, vol. 7, pp. 17450-17460, Feb. 2019.

    \bibitem{Lee21-1} W. Lee, K. Lee, H. H. Choi, and V. C. M. Leung, ``Deep learning for SWIPT: optimization of transmit-harvest-respond in wireless-powered interference channel,'' \emph{IEEE Trans. Wireless Commun.}, vol. 20, no. 8, pp. 5018-5033, Aug. 2021.

    \bibitem{Mao17} B. Mao, Z. M. Fadlullah, F. Tang, N. Kato, O. Akashi, T. Inoue, and K. Mizutani, ``Routing or computing? the paradigm shift towards intelligent computer network packet transmission based on deep learning,'' \emph{IEEE Trans. on Comput.}, vol 66, no. 11, pp. 1946-1960, Nov. 2017.
    
    \bibitem{Zhang20} Z. Zhang, P. Cui, and W. Zhu, ``Deep learning on graphs: a survey,'' \emph{IEEE Trans. on Knowledge and Data Eng.}, vol 34, no. 1, pp. 249-270, Jan. 2022.

    \bibitem{Lee21-2} K. Lee, J.-P. Hong, and W. Lee, ``Deep Learning Framework for Secure Communication With an Energy Harvesting Receiver,'' \emph{IEEE Trans. on Vehi. Tech.}, vol 70, no. 10, pp. 10121-10132, Aug. 2021.
    
    \bibitem{Ding20} R. Ding, F. Gao, and X. S. Shen, ``3D UAV Trajectory Design and Frequency Band Allocation for Energy-Efficient and Fair Communication: A Deep Reinforcement Learning Approach,'' \emph{IEEE Trans. on Wirel. Commun.}, vol 19, no. 12, pp. 7796-7809, Dec. 2020.

    \bibitem{Liu22} Y. Liu, J. Yan, and X. Zhao, ``Deep Reinforcement Learning Based Latency Minimization for Mobile Edge Computing With Virtualization in Maritime UAV Communication Network,'' \emph{IEEE Trans. on Vehi. Tech.}, vol 71, no. 4, pp. 4225-4236, Apr. 2022.

    \bibitem{Kingma13} D. P. Kingma and M. Welling, ``Auto-encoding variational bayes,'' May 2014, \emph{arXiv:1312.6114}. [Online]. Available: http://arxiv.org/abs/1312.6114
        
    \bibitem{Zhu22} Y. Zhu, Y. Du, Y. Wang, Y. Xu, J. Zhang, Q. Liu, and S. Wu, ``A survey on deep graph generation: Methods and applications,'' Dec. 2022, \emph{arXiv:2203.06714}. [Online]. Available: http://arxiv.org/abs/2203.06714

    \bibitem{Simonovsky18} M. Simonovsky, and N.Komodakis, ``GraphVAE: towards generation of small graphs using variational autoencoders,'' \emph{International Conference on Aritficial Neural Networks}, pp. 412-422, 2018.

    \bibitem{Lauinger22} V. Lauinger, F. Buchali, and L. Schmalen, ``Blind Equalization and Channel Estimation in Coherent Optical Communications Using Variational Autoencoders,'' \emph{IEEE Journ. of Selec. Area. Commun.}, vol 40, no. 9, pp. 2529-2539, Sep. 2022.
    
    \bibitem{Alawad22} M. A. Alawad, M. Q. Hamdan, and K. A. Hamdi, ``Innovative Variational AutoEncoder for an End-to-end Communication System,'' \emph{IEEE Access}, early access, Nov. 2022. 

    \bibitem{Sgora15} A. Sgora, D. J. Vergados, and D. D. Vergados, ``A survey of TDMA scheduling schemes in wireless multihop networks,'' \emph{ACM Computing Surveys}, vol. 47, no. 3, article. 53, pp. 1-39, Apr. 2015.
   

    \bibitem{Atapattu16} S. Atapattu and J. Evans, ``Optimal energy harvesting protocols for wireless relay networks,'' in \emph{IEEE Trans. Wireless Commun.}, vol. 15, no. 8, pp. 5789-5803, Aug. 2016.
    
    \bibitem{Clerckx19} B. Clerckx, R. Zhang, R. Schober, D. W. K. Ng, D. I. Kim, and H. V. Poor, ``Fundamentals of wireless information and power transfer: From RF energy harvester models to signal and system designs,'' \emph{IEEE J. Sel. Areas Commun.}, vol. 37, no. 1, pp. 4-33, Jan. 2019.
    
    \bibitem{Lim22} J.-T. Lim, T. Kim, and I. Bang, ``Impact of Outdated CSI on the Secure Communication in Untrusted In-Band Full-Duplex Relay Networks,'' \emph{IEEE Access}, vol. 10, no. 7, pp. 19825-19835, Feb. 2022. 

    \bibitem{Steven05} P. Steven, W. Martin, P. P. J. Sloan, P. Shirley, B. Smits, and C. Hansen, ``Interactive ray tracing,'' in \emph{ACM SIGGRAPH 2005 Courses}, pp. 119-126, 2005.

    \bibitem{Kingma14} D. P. Kingma and J. Ba, ``Adam: A method for stochastic optimization,'' Jul 2015, \emph{arXiv:1412.6980}. [Online]. Available: http://arxiv.org/abs/1412.6980
    
    \bibitem{Doersch16} C. Doersch, ``Tutorial on variational autoencoders,'' \emph{arXiv:1606.05908}. Aug 2016, [Online]. Available: http://arxiv.org/abs/1606.05908

    \bibitem{Glorot10} X. Glorot and Y. Bengio, ``Understanding the difficulty of training deep feedforward neural networks,'' in \emph{Proc. 13th Int. Conf. Artif. Intell. Statist.}, pp. 249-256, 2010.







\end{thebibliography}

%%%%%%%%%%%%%%%%%%%%%%%%%%%%%%%%%%%%%%%%%%%%%%%%%%%%%%%%%%%%%%%%%%%%%%%%%%%%%
\begin{IEEEbiography}[{\includegraphics[width=1in,height=1.25in,clip,keepaspectratio]{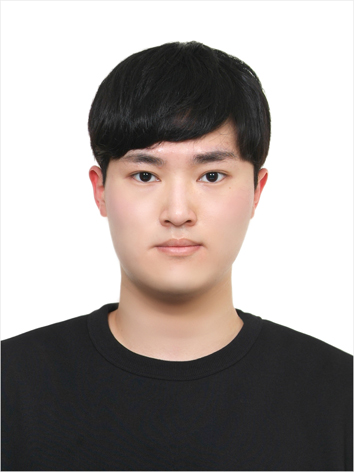}}]{Kiseop Chung} received a B.S degree in Electrical and Computer Engineering from Seoul National University, Seoul, South Korea, in 2022. Since June 2022, he has been a research officer at the Agency for Defense Development (ADD), South Korea. His research interests include internet of things (IoT), wireless networks, unsupervised machine learning, embedded systems, and hardware architecture.
\end{IEEEbiography}

\begin{IEEEbiography}[{\includegraphics[width=1in,height=1.25in,clip,keepaspectratio]{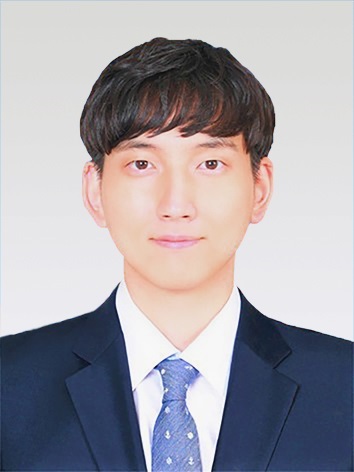}}]{Jin-Taek Lim} (S'14--M'19) received a B.S degree in Electrical and Electronic Engineering from Yonsei University, Seoul, South Korea, in 2012, and M.S. and Ph.D. degrees in Electrical Engineering from the Korea Advanced Institute of Science and Technology (KAIST), Daejeon, South Korea, in 2014 and 2019, respectively. From March 2019 to April 2023, he was a senior researcher at the Agency for Defense Development (ADD), South Korea. He currently joined Samsung Electronics, South Korea, in May 2023, as a senior researcher. His research interests include internet of things (IoT), simultaneous wireless information and power transfer (SWIPT), information security, and full-duplex systems.
\end{IEEEbiography}

\EOD

\end{document}